\documentclass[12pt]{article}

\usepackage[T1]{fontenc}
\usepackage{lmodern}
\usepackage[margin=1in]{geometry}
\usepackage{setspace}
\usepackage{amsmath,amssymb,amsthm,mathtools}
\usepackage{booktabs,array}
\usepackage{enumitem}
\usepackage{microtype}
\usepackage{xcolor}
\usepackage{graphicx}
\usepackage{tikz}
\usetikzlibrary{arrows.meta,calc,positioning}

\usepackage{authblk}

\usepackage{url}
\usepackage{longtable}
\usepackage[backend=biber,style=authoryear,natbib=true,maxcitenames=2,maxbibnames=99,dashed=false]{biblatex}
\usepackage[colorlinks=true,linkcolor=blue!50!black,citecolor=blue!50!black,urlcolor=blue!50!black]{hyperref}
\usepackage[nameinlink,capitalize,noabbrev]{cleveref}
\usepackage{adjustbox}
\usepackage{listings}
\usepackage{multicol}

\lstdefinestyle{appendixpython2col}{
  language=Python,
  basicstyle=\ttfamily\fontsize{5.2pt}{5.55pt}\selectfont,
  columns=fullflexible,
  keepspaces=true,
  breaklines=true,
  breakatwhitespace=false,
  showstringspaces=false,
  tabsize=2,
  xleftmargin=0pt,
  xrightmargin=0pt,
  aboveskip=0pt,
  belowskip=0pt,
  frame=none
}

\allowdisplaybreaks

\newtheorem{theorem}{Theorem}[section]
\newtheorem{lemma}[theorem]{Lemma}

\theoremstyle{definition}

\newcommand{\E}{\mathbb E}
\newcommand{\ALG}{\mathrm{ALG}}
\newcommand{\OPT}{\mathrm{OPT}}
\newcommand{\LP}{\mathrm{LP}}
\newcommand{\vol}{\operatorname{vol}}
\newcommand{\cost}{\operatorname{cost}}
\newcommand{\mk}{\mathrm{mk}}
\newcommand{\umk}{\mathrm{umk}}
\DeclareMathOperator{\Var}{Var}

\title{Cross-Shift Analysis for Unrelated-Machine Weighted Completion Time:
A \texorpdfstring{$(1.3168+\varepsilon)$}{(1.3168+epsilon)}-Approximation}

\author[1]{Weitian Tong\thanks{Corresponding author:
\href{mailto:wtong.research@gmail.com}{wtong.research@gmail.com}}}

\author[1]{Yao Xu}

\affil[1]{School of Computing, Georgia Southern University,
Statesboro, GA 30461, USA}

\date{}

\begin{document}
\maketitle

\begin{abstract}
We study the problem of minimizing total weighted completion time on
unrelated parallel machines with machine-independent job weights.
The best previous approximation guarantee for this problem is
arbitrarily close to $1.36$, due to Li [SODA 2025], who developed a
configuration-LP and iterative-rounding framework based on randomly
shifted geometric size classes and a computer-assisted final analysis.

We improve the approximation guarantee to arbitrarily close to
$1.3168$. The scheduling algorithm retains Li's
configuration-LP and iterative-rounding framework, with a different fixed
geometric class ratio.  The improvement comes from a new analysis of the
random geometric shift.  For each machine and Smith prefix, we normalize
the physical prefix before averaging over the shift, so that its normalized
size distribution and configuration moments remain fixed across all shifts.
A scale-dependent configuration bound and a cross-shift averaging argument
then reduce the approximation analysis to a one-dimensional certificate.

The final certificate is computer assisted and reproducible.  Its rational
data are verified exactly where possible, while the remaining continuous
inequalities are certified using directed interval arithmetic.  
This yields the first improvement over Li's guarantee for
the machine-independent-weight model.
\end{abstract}

\noindent\textbf{Keywords:}
unrelated parallel machines; weighted completion time;
approximation algorithms; iterative rounding; computer-assisted proof.

\section{Introduction}\label{sec:intro}

Let $J$ be a set of $n$ jobs and $M$ a set of $m$ unrelated machines.  Job $j\in J$ has a machine-independent weight $w_j>0$ and processing time $p_{ij}>0$ on machine $i\in M$; 
infeasible machine-job pairs can be removed, or equivalently assigned processing time $+\infty$.  
Every job is assigned nonpreemptively to one feasible machine.  
If $C_j$ denotes the completion time of job $j$, the objective is
$\min \sum_{j\in J} w_jC_j$.
In standard three-field notation \citep{graham1979optimization,pinedo2016scheduling}, 
the problem is $R\|\sum_j w_jC_j$.

The model captures the allocation of prioritized jobs to heterogeneous
processing resources: processing times depend on the selected machine,
while completion costs also depend on the sequencing of jobs assigned to
that machine.  For any fixed assignment, Smith's rule orders the jobs on
machine $i$ by nonincreasing $w_j/p_{ij}$ and is optimal
\citep{Smith1956}.  Hence the principal algorithmic difficulty is the
machine assignment, whose choices determine both processing times and the
resulting completion costs.  This interaction has made
$R\|\sum_j w_j C_j$ a longstanding benchmark for mathematical-programming
relaxations, randomized rounding, and approximation algorithms
\citep{VredeveldHurkens2002}.

The problem is strongly NP-hard and APX-hard
\citep{HoogeveenSchuurmanWoeginger2001}.  Classical mathematical-programming
and randomized-rounding approaches established the $3/2$ benchmark
\citep{Chudak1999,SethuramanSquillante1999,Skutella2001,
SchulzSkutella2002}.  \citet{BansalSrinivasanSvensson2021} first broke this
barrier using strong negative-correlation rounding.  Subsequent work
improved the approximation ratio through increasingly refined dependent
rounding and correlated-selection mechanisms
\citep{Li2020,ImShadloo2020,ImLi2023,BavejaQuSrinivasan2024,Harris2024}.

Most recently for the machine-independent-weight model,
\citet{Li2025} developed an iterative-rounding framework that weakens the
correlation requirements needed in earlier approaches.  Within each marked
machine--class group, at most one marked edge is selected, while the
interaction between an unmarked edge and the marked group is controlled
only in aggregate.  Combined with a configuration LP and a
computer-assisted final analysis, this yields a randomized
$(1.36+\varepsilon)$-approximation.

\subsection{Main result and contributions}

Our main result is the following.

\begin{theorem}\label{thm:main}
For every fixed $\varepsilon>0$, there is a randomized polynomial-time
$(1.3168+\varepsilon)$-approximation algorithm for
$R\left\|\sum_j w_jC_j\right.$.
\end{theorem}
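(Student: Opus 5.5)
The plan is to keep Li's algorithmic pipeline intact and change only the analysis. First, solve the configuration LP for $R\|\sum_j w_jC_j$ to within a $(1+\varepsilon)$ factor, using the standard guessing and discretization steps; this contributes the $+\varepsilon$ in the statement. Then draw a uniformly random shift $\theta\in[0,1)$ and partition the processing times on each machine into geometric classes $(\tau^{k-1+\theta},\tau^{k+\theta}]$ with a fixed ratio $\tau$, whose value will be fixed by the final certificate. Finally, run Li's iterative rounding on the marked machine--class groups. From Li's framework I take as a black box the per-edge guarantees: every job is assigned exactly once, the marginals match the LP, within each marked group at most one marked edge is selected, and an unmarked edge is only aggregately correlated with the marked group of its class. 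These guarantees bound, for each pair $j,j'$ on machine $i$, the probability that both are assigned to $i$ by a function of their LP values and of whether they fall in the same class.

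Next I write the expected cost via Smith's rule: $\E[\ALG]=\sum_i\sum_j w_j p_{ij}\Pr[j\to i]+\sum_i\sum_{j'\prec_i j} w_j p_{ij'}\Pr[j,j'\to i]$. I organize the pair term by Smith prefixes. For a fixed machine $i$ and Smith prefix $P$ (the jobs up to a given rank), I compare the rounding's contribution with the LP's configuration cost restricted to $P$. The key new step is normalization. I rescale the prefix by its own scale, for example the LP-weighted total size, so that its normalized size distribution $\mu_P$ and configuration moments ($\E\sum p$ and $\E(\sum p)^2$ over the LP's configuration distribution) are fixed objects independent of $\theta$. In normalized coordinates, the shift $\theta$ only moves the class boundaries relative to $\mu_P$. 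As a result, the penalty from same-class pairs, which is where Li's correlation guarantee is weakest, becomes an explicit function $F(\mu_P,\theta)$.

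I then prove a scale-dependent configuration bound, which lower-bounds the LP cost of the prefix by a convex combination of $\frac12(\E\sum p)^2$-type terms and $\sum p^2$-type terms with weights depending on the normalized scale. I also perform a cross-shift averaging. Integrating $F(\mu_P,\theta)$ over $\theta$ makes every point of $\mu_P$ see a uniformly distributed class position. Consequently, the averaged penalty decomposes as an integral over sizes $x$ of a one-dimensional kernel $g(x)$ against $\mu_P$. Combining these two pieces reduces the inequality $\E_\theta[\text{prefix cost}]\le(1.3168)\cdot\LP(\text{prefix})$ to a pointwise inequality in a single scalar variable, one that holds for all measures as soon as it holds for point masses. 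Summing over prefixes and machines via Smith-ordered telescoping then gives $\E[\ALG]\le(1.3168+\varepsilon)\OPT$.

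The main obstacle is the final one-dimensional certificate: showing that $\sup_x$ of the ratio, for the chosen $\tau$, is at most $1.3168$. I would first handle it by choosing $\tau$ and a finite grid of rational breakpoints and verifying the rational data exactly. On each grid cell I would bound the continuous functions, which involve $\tau^{x}$ and logarithms, using directed interval arithmetic, and I would refine adaptively near the maximizer. A secondary risk is ensuring that the normalization truly makes the moments shift-invariant when the prefix structure itself depends on class membership. To address this, I define prefixes by Smith order alone, which is independent of $\theta$, so that only the rounding correlations depend on the shift.
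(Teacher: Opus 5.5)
Your architecture matches the paper's: normalize the fixed Smith prefix before averaging over the shift, prove a scale-dependent configuration bound, reduce to one dimension by cross-shift averaging, and certify with interval arithmetic. However, the step that is supposed to produce the one-dimensional kernel fails as stated.

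\textbf{The class saving is not linear.} Li's guarantees do not bound each pair probability by a function of LP values and class membership; unmarked--marked correlations are controlled only in aggregate. What you can use is the prefix bound \eqref{eq:li-round-prefix}. Its shift-dependent part is a \emph{saving} produced by same-class marked edges, not a penalty. Normalized by $V$, it is $S_\beta=\sum_k\min\{v_k,\tau_k\}^2$ with $v_k=\Pr(\tau_k\le\mathsf R<\rho\tau_k)$, a capped quadratic in the class volumes. Averaging over $\theta$ does not make this linear in the size distribution. Since $v_k^2=\sum_{j,j'}\pi_j\pi_{j'}\mathbf 1\{r_j,r_{j'}\in[\tau_k,\rho\tau_k)\}$, the average becomes a pair kernel (the probability that two sizes share a class), on top of the caps. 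So no $g$ with $\overline S=\int g\,d\mu_P$ exists, and checking point masses does not suffice. For example, if all normalized sizes are at least $\rho$, the caps are inactive and $\overline S=1$ for a point mass. But $\overline S=1/K$ when the volume is split evenly among $K$ sizes with pairwise ratios at least $\rho$.

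\textbf{The missing idea} is to linearize \emph{before} averaging. Use the tangent bound $\min\{v,\tau\}^2\ge2\lambda v-\lambda^2$, which is valid for $v\in[0,1]$ only when $0\le\lambda\le b(\tau)$; for $\tau<1$, the choice $b(\tau)=1-\sqrt{1-\tau^2}$ keeps the line below the plateau $\tau^2$. Take a scale-dependent slope $\lambda=\psi(\log_\rho\tau)$. The linear parts then average to $\E[\mathcal W(\mathsf R)]$ with $\mathcal W(r)=2\int_{\log_\rho r-1}^{\log_\rho r}\psi(s)\,ds$, and the squares average to the single constant $\Pi=\int\psi^2$. Only then does the problem reduce to individual sizes, namely to
$2r+1+\Pi-\mathcal W(r)\le\alpha\bigl(r+2\gamma-\gamma^2+h_\gamma(r)\bigr)$ for all $r>0$, which is the paper's pointwise and scalar conditions combined. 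The object to construct and certify is therefore the capped profile $\psi$, not merely the ratio $\rho$.

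\textbf{The configuration bound is unspecified, and its natural reading is false.} Suppose the weight between $F\ge1$ and $F\ge Q$ may depend on each job's normalized size. Take one job of normalized size $L>1$ carrying volume fraction $a\in(0,1)$, plus many tiny jobs. Put the big job alone in its configurations and spread the tiny jobs evenly over the remaining ones. Then $F$ is arbitrarily close to $aL+(1-a)^2/(1-a/L)$, which is strictly less than $aL+(1-a)$. The latter is the value you get by weighting the tiny jobs fully toward $1$ and the big job fully toward $r$, so the bound is violated.

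What does hold is a bound for one global level $\gamma$. From $(T_f-\gamma)^2\ge\sum_{j\in f\cap[j^*],\,r_j\ge\gamma}(r_j-\gamma)^2$ for every configuration, you get $F\ge2\gamma-\gamma^2+\E[(\mathsf R-\gamma)_+^2/\mathsf R]$. Dividing by $r$ turns the jobwise sum into an expectation under the volume-biased law, so it combines linearly with the saving bound.

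Note also that Li's classes are defined on the swapped, machine-independent sizes $p_j:=w_j$, not on $p_{ij}$ machine by machine.
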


To our knowledge, this is the first improvement over Li's
$1.36+\varepsilon$ guarantee \citep{Li2025} for
$R\|\sum_j w_jC_j$ with machine-independent job weights.
Our result retains Li's iterative-rounding mechanism unchanged, apart from using a different
fixed geometric class ratio.  The improvement instead comes from a new
cross-shift analysis: for each physical Smith prefix, we normalize before
averaging over the random geometric shift, thereby preserving a common
normalized size distribution and configuration moments across all
shifts.  An affine lower bound on the class saving then converts the
shifted class structure into a one-dimensional sliding-window transform,
which can be combined with a new scale-dependent configuration bound.

The analysis is based on fixing one physical Smith prefix before averaging
over the random geometric shift.  Let $V$ be its fractional processing
volume.  Normalizing by $V$ gives a random normalized size $\mathsf R$
and two moments $Q$ and $F$ that remain fixed across all shifts.
For every $\gamma\ge0$, we prove the scale-dependent bound
\(
F
\ge
2\gamma-\gamma^2
+
\E\!\left[
\frac{(\mathsf R-\gamma)_+^2}{\mathsf R}
\right].
\)

For Li's class saving, we replace each capped quadratic term by an affine
lower bound and average these bounds over the common logarithmic shift.
This produces the sliding-window function
\(
\mathcal W(r)
=
2\int_{\log_\rho r-1}^{\log_\rho r}\psi(s)\,ds
\)
for a scale profile $\psi$.  Combining the scale-dependent configuration
bound with this cross-shift saving bound reduces the entire prefix
analysis to one pointwise inequality and one scalar condition.
For
$\rho=\frac{53}{25}$,
$\gamma=\frac{89}{100}$,
an explicit rational $232$-cell profile satisfies these conditions and
certifies the factor $1.3168$.

\subsection{Related work and scope}

For restricted variants, stronger approximation schemes are known.
A fixed number of unrelated machines admits PTAS results
\citep{AfratiEtAl1999,AfratiBampisKenyonMilis2000}; identical and uniformly
related machines also admit PTASes for weighted completion time
\citep{SkutellaWoeginger2000,ChekuriKhanna2001}.  Specialized guarantees
are known for uniform Smith-ratio instances
\citep{KalaitzisSvenssonTarnawski2017}.  Our focus is the unrestricted
number of unrelated machines with arbitrary processing times and
machine-independent job weights.

There is also a substantial computational literature on minimizing total
weighted completion time on parallel machines.  
For the same problem
$R\|\sum_j w_jC_j$, \citet{VredeveldHurkens2002} compare approximation
algorithms computationally, while \citet{YuEtAl2026} develop an exact
branch-price-and-cut algorithm.  For identical parallel machines,
\citet{KowalczykLeus2018} propose a branch-and-price approach, and
\citet{LendlPferschyRener2024} obtain an FPTAS for a related rescheduling
variant.  These results complement our focus on improving the
polynomial-time approximation guarantee for
$R\|\sum_j w_jC_j$ with machine-independent job weights.

The more general model in which the job weight may depend on the chosen
machine admits related negative-correlation techniques.  Recently,
\citet{HarrisLiRajuValieva2026} introduced the Dirichlet mechanism and
obtained a $1.387$ approximation for that broader model.  This result and
Theorem~\ref{thm:main} therefore address different problem scopes: the
present analysis relies on the machine-independent sizes obtained after
Li's size-weight interchange.
Very recently, \citet{Harris2026Note} developed another strong
negative-correlation rounding method, combining random-walk and
Dirichlet-based ideas and improving the uniform pairwise
negative-correlation guarantee.  That work concerns the rounding
primitive itself and does not give a new approximation ratio for the
machine-independent weighted-completion-time problem.

\paragraph{Organization.}
\Cref{sec:li} states the precise configuration-LP and rounding facts
inherited from Li.  \Cref{sec:newanalysis} develops the fixed-prefix
normalization, the scale-dependent configuration bound, the cross-shift
saving bound, and the master certificate; it then verifies the explicit
$1.3168$ certificate and derives Theorem~\ref{thm:main}.
\Cref{sec:conclusion} concludes the paper.  
A proof-dependency map is given in Appendix \ref{app:proof-map}; 
the complete rational certificate is provided in Appendix \ref{app:certificate};
and a compact version of the verification program is included in Appendix \ref{app:verifier}.

\section{Li's configuration-LP rounding framework}\label{sec:li}

We use Li's analysis through the prefix identities stated below.
This section recalls the notation and the corresponding results from
\citet{Li2025}.

\subsection{Size-weight swap and configuration LP}

Following \citet{Li2025}, interchange processing times and weights.  For each feasible original pair $(i,j)$, define the transformed processing time $p_j:=w_j$ and transformed weight $w_{ij}:=p_{ij}$.  Li proves that every assignment has the same Smith-order cost before and after this swap \citep[Lemma~2.1]{Li2025}.  From this point through \Cref{sec:newanalysis}, $p_j$ and $w_{ij}$ denote the transformed quantities.  Thus $p_j>0$ is machine independent, $w_{ij}>0$ may depend on $i$, and the Smith ratio on machine $i$ is
\(
\sigma_{ij}:=\frac{w_{ij}}{p_j}.
\)

For each machine $i$, fix once and for all an arbitrary total order of
the jobs that refines nonincreasing $\sigma_{ij}$; ties in the Smith
ratio are broken according to this fixed order.  We use the same
tie-breaking rule throughout the marking procedure, the definition of
Smith prefixes, and the final sequencing of jobs.

For each machine $i\in M$, let
$J_i:=\{j\in J:(i,j)\text{ is feasible}\}$,
$\mathcal F_i:=\{f:f\subseteq J_i\}$.
For $f\in\mathcal F_i$, let $\cost_i(f)$ be the weighted completion time
of the jobs in $f$ when processed on machine $i$ in Smith order.  Li's
configuration LP is
\begin{align}
\min\quad
&\sum_{i\in M}\sum_{f\in\mathcal F_i}y_{if}\cost_i(f),
\label{lp:obj}\\
\text{s.t.}\quad
&\sum_{f\in\mathcal F_i}y_{if}=1
&& i\in M,
\label{lp:machine}\\
&\sum_{i\in M}
  \sum_{\substack{f\in\mathcal F_i\\j\in f}}y_{if}=1
&& j\in J,
\label{lp:job}\\
&y_{if}\ge0
&& i\in M,\ f\in\mathcal F_i.
\label{lp:nonneg}
\end{align}

For every machine-job pair, define
$z_{ij}:= \sum_{f\in\mathcal F_i, j\in f}y_{if}$ if $j\in J_i$; and 0 if $j\notin J_i$.
Then $\sum_{i\in M}z_{ij}=1$ for every $j\in J$.
Whenever a machine $i$ is fixed below, we write $\sum_f$ as shorthand
for $\sum_{f\in\mathcal F_i}$.

For any job set $S\subseteq J$, write
$p(S):=\sum_{j\in S}p_j$,
and, after a machine $i$ is fixed, define its fractional processing volume on $i$ by
$\vol_i(S):=\sum_{j\in S}z_{ij}p_j.$
These two job-set quantities will be used throughout the analysis.

\subsection{Random geometric classes and Li's edge notation}

Fix a constant class ratio $\rho>1$.  Draw $\beta\in[1,\rho)$ so that $\ln\beta$ is uniformly distributed on $[0,\ln\rho)$.  For each $k\in\mathbb Z$, Li defines
\[
J_k:=\{j\in J:\beta\rho^k\le p_j<\beta\rho^{k+1}\}.
\]
Since the transformed processing times $p_j$ are machine independent, the sets $(J_k)_{k\in\mathbb Z}$ form a global partition of the jobs.

Li constructs a bipartite support multigraph between the machines $M$ and jobs $J$.  Let $E$ be its edge multiset and let $x_e\in(0,1]$ be the fractional value associated with edge $e\in E$.  If $e$ is incident to job $j$, set $p_e:=p_j$ and define
$\vol(e):=x_ep_e$,
$\vol(E'):=\sum_{e\in E'}\vol(e)$
for 
$E'\subseteq E$.
For a machine $i$, let $\delta_i\subseteq E$ denote its incident edges, and let $\delta_i^k\subseteq\delta_i$ denote the edges between $i$ and jobs in $J_k$.  Let $\delta_i^{k,\mk}$ and $\delta_i^{k,\umk}$ be the marked and unmarked edges in $\delta_i^k$, respectively.

Within each $\delta_i^k$, the edges are ordered by nonincreasing Smith ratio $w_{ij}/p_j$.  Li marks the first $\beta\rho^k$ units of edge volume, or all of $\delta_i^k$ when its total volume is smaller.  If necessary, one boundary edge is split into a marked and an unmarked parallel edge.  Hence
\(
\vol(\delta_i^{k,\mk})
=
\min\{\vol(\delta_i^k),\beta\rho^k\}
\) holds
\citep[Claim~2.3]{Li2025}.

Li rounds the classes independently using cycle and pseudo-marked-path updates.  The resulting assignment selects at most one marked edge from each set $\delta_i^{k,\mk}$ \citep[Lemmas~2.12-2.13]{Li2025}.  Moreover, correlations between an unmarked edge and the marked edges of the same machine and class are controlled in aggregate rather than edge by edge \citep[Lemma~2.10 and Corollary~2.11]{Li2025}.  These properties are the source of the negative correction in Li's expected-cost bound stated below.

\subsection{The inherited prefix formulas}

Fix a machine $i\in M$.  For each job $j$, let
\(
\sigma_j:=\frac{w_{ij}}{p_j}
\)
be its Smith ratio on machine $i$. 
Index the jobs according to the
fixed Smith order defined above, so that
$\sigma_1\ge\sigma_2\ge\cdots\ge\sigma_n$,
$\sigma_{n+1}:=0$.
For $j^*\in\{1,\ldots,n\}$, let
$[j^*]:=\{1,\ldots,j^*\}$.
Let $\ALG$ denote the random integral assignment produced by Li's iterative
rounding, and let $\E_{\rm rnd}[\cdot\mid\beta]$ denote expectation over the
rounding randomness conditional on the class shift $\beta$.

The following theorem collects the results from Li's analysis that are used in this paper.

\begin{theorem}[Li's prefix interface]
\label{thm:li-prefix}
For every machine $i$, its contribution to the configuration LP is
\begin{align}
\sum_f y_{if}\cost_i(f)
=
\sum_{j^*=1}^n
(\sigma_{j^*}-\sigma_{j^*+1})
\frac12
\left(
\sum_{j\in[j^*]}z_{ij}p_j^2
+
\sum_f y_{if}p(f\cap[j^*])^2
\right).
\label{eq:li-lp-prefix}
\end{align}
Conditional on a fixed class shift $\beta$, Li's iterative rounding satisfies
\begin{align}
\E_{\rm rnd}[\cost_i(\ALG)\mid\beta]
\le
\sum_{j^*=1}^n
(\sigma_{j^*}-\sigma_{j^*+1})
\Bigg[
&\sum_{j\in[j^*]}z_{ij}p_j^2
+\frac12\vol_i([j^*])^2
\nonumber\\
&-\frac12\sum_{k\in\mathbb Z}
\min\{\vol_i([j^*]\cap J_k),\beta\rho^k\}^2
\Bigg].
\label{eq:li-round-prefix}
\end{align}
Moreover, the rounding terminates in polynomial time and returns an integral assignment.
\end{theorem}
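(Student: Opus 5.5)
The plan has two parts. The LP identity \eqref{eq:li-lp-prefix} is a direct computation. The rounding bound \eqref{eq:li-round-prefix} is obtained by importing Li's per-prefix analysis and then rewriting it in the prefix notation used here.

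\emph{LP identity.} Fix a configuration $f$ on machine $i$ and list its jobs in the fixed Smith order. Each job $j\in f$ completes at time $C_j=\sum_{j'\in f,\,j'\le j}p_{j'}$. Write $w_{ij}=p_j\sigma_j$ and telescope $\sigma_j=\sum_{j^*\ge j}(\sigma_{j^*}-\sigma_{j^*+1})$, using $\sigma_{n+1}=0$. This gives
\[
\cost_i(f)=\sum_{j^*}(\sigma_{j^*}-\sigma_{j^*+1})\sum_{j\in f\cap[j^*]}p_jC_j .
\]
The inner sum is $\sum_{j'\le j\text{ in }f\cap[j^*]}p_jp_{j'}$, which equals $\tfrac12\bigl(p(f\cap[j^*])^2+\sum_{j\in f\cap[j^*]}p_j^2\bigr)$. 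Multiply by $y_{if}$ and sum over $f$. By the definition of $z_{ij}$, we have $\sum_f y_{if}\sum_{j\in f\cap[j^*]}p_j^2=\sum_{j\in[j^*]}z_{ij}p_j^2$, and this yields \eqref{eq:li-lp-prefix}.

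\emph{Rounding bound.} The same telescoping applies to the random integral assignment $\ALG$. Writing $A=\ALG\cap[j^*]$, we get
\[
\E_{\rm rnd}[\cost_i(\ALG)\mid\beta]=\sum_{j^*}(\sigma_{j^*}-\sigma_{j^*+1})\tfrac12\,\E\bigl[p(A)^2+\textstyle\sum_{j\in A}p_j^2\bigr],
\]
so it suffices to bound the bracket for each prefix. Expand $p(A)^2$ over pairs of edges of $\delta_i$ incident to prefix jobs. The diagonal terms contribute $\Pr[e]p_e^2=x_ep_e^2$ each, by Li's marginal preservation. Together with the singleton sum, they give $\sum_{j\in[j^*]}z_{ij}p_j^2$, once parallel split edges are merged and we use $\sum_{e\text{ at }j}x_e=z_{ij}$. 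The off-diagonal pairs are where Li's correlation lemmas enter. For pairs in different classes, independence of the per-class rounding gives $\E\le x_ex_{e'}$. For two marked edges in the same $\delta_i^{k,\mk}$, the product is $0$ because at most one such edge is selected \citep[Lemmas~2.12--2.13]{Li2025}. Pairs involving an unmarked edge are handled in aggregate by \citep[Lemma~2.10 and Corollary~2.11]{Li2025}, which bound them by the product of the marginals. Summing, the cross terms are at most $\vol_i([j^*])^2$ minus the within-class marked contribution, namely $\sum_k\vol(\delta_i^{k,\mk}\cap[j^*])^2$ up to the diagonal terms already counted.

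The main obstacle is this last identification. The loss in Li's bound is stated for the marked volume of the prefix, whereas \eqref{eq:li-round-prefix} uses $\min\{\vol_i([j^*]\cap J_k),\beta\rho^k\}$. Two facts make these agree. First, marking proceeds in Smith order within each class. Second, Smith prefixes of $\delta_i^k$ correspond to $[j^*]\cap J_k$, with the fixed tie-breaking rule ensuring the split boundary edge is consistent. Together with Claim~2.3, these give that the marked volume inside the prefix equals exactly $\min\{\vol_i([j^*]\cap J_k),\beta\rho^k\}$.

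I would verify this carefully, including the case where the split edge straddles $j^*$. I would also check that Li's aggregate unmarked-edge bound applies prefix by prefix rather than only for full classes; I expect this to be true because Li's argument is itself organized by Smith prefixes. Polynomial termination and integrality are quoted directly from Li.
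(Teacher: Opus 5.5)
Your proposal is correct, and it goes further than the paper, which gives no derivation of its own. The paper identifies \eqref{eq:li-lp-prefix} with Li's Lemma~3.1 and \eqref{eq:li-round-prefix} with Li's Lemma~3.2, takes correlation, integrality and termination from Li's Section~2, and adds only the remark that $\min\{\vol_i([j^*]\cap J_k),\beta\rho^k\}$ is the marked volume inside the prefix.

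You instead prove the LP identity from scratch. Your telescoping of $\sigma_j$ and the pair count giving $\tfrac12\bigl(p(f\cap[j^*])^2+\sum_{j\in f\cap[j^*]}p_j^2\bigr)$ are exactly right. You then rebuild Lemma~3.2 from the Section~2 properties. This makes explicit which rounding properties the ratio actually uses: marginal preservation, independence across classes, at most one selected edge in each $\delta_i^{k,\mk}$, and the unmarked--marked aggregate bound.

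The cost is that your argument is only as good as the precise form of Li's Lemma~2.10 and Corollary~2.11. Let $X_e$ be the indicator that $e$ is selected. You need the aggregate bound in a volume-weighted form such as $\E\bigl[X_{e'}\sum_{e\in\delta_i^{k,\mk}}p_eX_e\bigr]\le x_{e'}\vol(\delta_i^{k,\mk})$ for every unmarked $e'\in\delta_i^k$. You also need a bound for two unmarked edges of the same machine and class; pairwise non-positive correlation suffices. State these requirements explicitly rather than saying the pairs are ``bounded by the product of the marginals.''

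The two points you left open both close.

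\emph{Prefix-compatibility of the aggregate bound.} $\delta_i^{k,\mk}$ is an initial segment of $\delta_i^k$ in the same Smith order, with the same tie-breaking, used for prefixes. The two halves of a split edge belong to one job. Hence any $[j^*]$ containing an unmarked edge of class $k$ contains the entire marked group. So the full-group aggregate bound is exactly what the expansion of $\E[p(A)^2]$ requires. If $[j^*]$ contains no unmarked edge of class $k$, only marked--marked pairs occur in that class, and they vanish. The same observation gives $\vol(\delta_i^{k,\mk}\cap[j^*])=\min\{\vol_i([j^*]\cap J_k),\beta\rho^k\}=:m_k$, whether or not the split job lies in $[j^*]$.

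\emph{Diagonal bookkeeping.} The sign works out. The off-diagonal terms $\sum_{e\ne e'}p_ep_{e'}\E[X_eX_{e'}]$ are at most $\vol_i([j^*])^2-\sum_k m_k^2-\sum_{e\text{ unmarked in }[j^*]}x_e^2p_e^2\le\vol_i([j^*])^2-\sum_k m_k^2$. This produces the bracket with coefficient $1$ on $\sum_{j\in[j^*]}z_{ij}p_j^2$.
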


Equation~\eqref{eq:li-lp-prefix} is Li's Lemma~3.1, and
\eqref{eq:li-round-prefix} is Li's Lemma~3.2.  The correlation, integrality,
and termination properties needed for the latter are established in
Section~2 of \citet{Li2025}.  These statements form the complete interface
with Li's analysis used in the remainder of this paper.

The last term in \eqref{eq:li-round-prefix},
\(
\frac12\sum_{k\in\mathbb Z}
\min\{\vol_i([j^*]\cap J_k),\beta\rho^k\}^2,
\)
is the \emph{class saving}.  For each class $J_k$, the quantity
\(
\min\{\vol_i([j^*]\cap J_k),\beta\rho^k\}
\)
is the marked volume contained in the Smith prefix $[j^*]$, by Li's marking rule.  The square of this marked volume quantifies the reduction in the expected pairwise completion-time contribution created by the negative dependence among marked edges.  Summing over all classes gives the total saving subtracted from the quadratic prefix bound.

Because $\sigma_{j^*}-\sigma_{j^*+1}\ge0$
for every prefix, it is sufficient to compare the two bracketed prefix
expressions in Theorem~\ref{thm:li-prefix} separately for every machine and
Smith prefix.  The next section proves such a comparison with factor
$1.3168$.  After the prefix analysis is complete, we return in
\Cref{sec:global} to sum these inequalities and incorporate the
$(1+\eta)$-approximate solution of the configuration LP.

\section{Cross-shift analysis}
\label{sec:newanalysis}

We now present the new analysis that improves the approximation guarantee
obtained from Li's rounding framework.  All results summarized in
Theorem~\ref{thm:li-prefix} are used as established facts.  We use Li's
rounding procedure and its structural guarantees without modification;
the only algorithmic parameter we change is the fixed geometric class
ratio, which we set to $\rho=\frac{53}{25}$
instead of the value $\rho=2$ used in Li's final approximation analysis.
Our main contribution is a sharper analysis of this same rounding
framework.

The key difference is the treatment of the random geometric shift.  We
first fix a machine and a Smith prefix and normalize this same prefix
before averaging over the shift.  Consequently, all quantities determined
by the fixed prefix remain unchanged as the geometric class boundaries
move.  This common normalization allows the configuration-LP contribution
and Li's class saving to be analyzed separately while preserving their
connection to the same underlying prefix.

The analysis proceeds in five steps.  First, \Cref{sec:normalize}
normalizes a fixed prefix and reduces the desired approximation guarantee
to a single target inequality.  Next, \Cref{sec:variance} derives a
stronger lower bound on the configuration-LP contribution by exploiting
the distribution of job sizes within the prefix.  Then,
\Cref{sec:saving} develops a lower bound on Li's class saving that can be
averaged over all geometric shifts.  \Cref{sec:master} combines these two
bounds and reduces the approximation analysis to a one-variable
pointwise inequality together with a scalar condition.  Finally,
\Cref{sec:certificate} gives a rigorous computer-assisted verification
of an explicit certificate that establishes the approximation factor
$1.3168$.

\subsection{Fixed-prefix normalization and the target inequality}
\label{sec:normalize}

Fix a machine $i$ and a Smith prefix $[j^*]$.  From this point until the global approximation argument, the machine and prefix remain fixed.  Let
\(
V:=\vol_i([j^*])=\sum_{j\in[j^*]}z_{ij}p_j
\)
be their fractional processing volume.  If $V=0$, then every term associated with this prefix in both inherited prefix formulas is zero, so there is nothing to prove.  We therefore assume $V>0$.

\subsubsection{Normalized sizes and shift-independent moments}

For every $j\in[j^*]$, define its normalized processing time by
\(
r_j:=\frac{p_j}{V}.
\)
The quantity
\[
\pi_j:=z_{ij}r_j=\frac{z_{ij}p_j}{V},
\qquad j\in[j^*],
\]
is the fraction of the prefix's total fractional processing volume
contributed by job $j$.  Since
$\sum_{j\in[j^*]}\pi_j=1,$
the values $(\pi_j)_{j\in[j^*]}$ form a probability distribution.

Define a random variable $\mathsf R$ by
$\Pr(\mathsf R=r_j)=\pi_j$,
$j\in[j^*]$.
This distribution is defined from the physical
prefix before the random geometric shift is sampled and therefore does not
depend on $\beta$.  With this choice, the normalized jobwise quadratic term is the first
moment of $\mathsf R$:
\begin{align*}
Q
&:=\frac1{V^2}\sum_{j\in[j^*]}z_{ij}p_j^2
 =\sum_{j\in[j^*]}z_{ij}r_j^2
 =\sum_{j\in[j^*]}\pi_j r_j
 =\E[\mathsf R].
\end{align*}

For each configuration $f\subseteq J$, let
\[
T_f:=\sum_{j\in f\cap[j^*]} r_j
    =\frac{p(f\cap[j^*])}{V}
\]
denote the total processing load contributed by the prefix jobs in
configuration $f$, normalized by the fractional prefix volume $V$.
Since the configuration variables satisfy $\sum_f y_{if}=1$, we may
view $y_{if}$ as a probability distribution over the configurations of
machine $i$.  We then define
\[
F
:=
\frac{1}{V^2}\sum_f y_{if}p(f\cap[j^*])^2
=
\sum_f y_{if}T_f^2.
\]
Thus $F$ is the second moment of the normalized prefix load $T_f$ under
the configuration distribution $(y_{if})_f$.  Its first moment is
\begin{align*}
\sum_fy_{if}T_f
&=
\frac1V
\sum_{j\in[j^*]}p_j\sum_{f\ni j}y_{if} 
=
\frac1V
\sum_{j\in[j^*]}z_{ij}p_j
=1.
\end{align*}
Consequently,
\(
F=\E_f[T_f^2]
 =1+\Var_f(T_f)
 \ge1.
\)

A second elementary bound will be useful for comparison later:
\begin{align*}
F
&=\sum_fy_{if}T_f^2 
\ge \sum_fy_{if}\sum_{j\in f\cap[j^*]}r_j^2 
=\sum_{j\in[j^*]}r_j^2\sum_{f\ni j}y_{if}
 =\sum_{j\in[j^*]}z_{ij}r_j^2
 =Q.
\end{align*}

We next strengthen $F\ge Q$ using the locations of the normalized sizes.

\subsubsection{Normalized class saving and the target inequality}

For a fixed shift $\beta$, the normalized lower boundary of class $k$ is
\(
\tau_k:=\frac{\beta\rho^k}{V},
\)
and the normalized fractional volume of the prefix in that class is
\(
v_k:=\frac{\vol_i([j^*]\cap J_k)}{V}.
\)
By the definition of $\mathsf R$ and of the job class $J_k$,
\(
v_k=\Pr\left(\tau_k\le \mathsf R<\rho\tau_k \right).
\)
Thus $0\le v_k\le1$ and $\sum_kv_k=1$.

For the fixed prefix and shift $\beta$, the class saving in \eqref{eq:li-round-prefix} can be written as
\(
\frac12\sum_{k\in\mathbb Z}
\min\{\vol_i([j^*]\cap J_k),\beta\rho^k\}^2
=
\frac{V^2}{2}
\sum_{k\in\mathbb Z}
\min\{v_k,\tau_k\}^2.
\)
Define the normalized class saving by
\[
S_\beta
:=
\sum_{k\in\mathbb Z}\min\{v_k,\tau_k\}^2,
\qquad
\overline S
:=
\E_\beta[S_\beta].
\]
Thus, $V^2S_\beta/2$ is the class saving for this prefix, while
$\overline S$ is its normalized expectation over the random shift.

Dividing the algorithmic prefix bracket in Theorem~\ref{thm:li-prefix}
by $V^2$ and averaging over $\beta$ gives
\(
Q+\frac12-\frac12\overline S,
\)
whereas the corresponding normalized configuration-LP bracket is
\(
\frac12(Q+F).
\)
Therefore, to obtain an approximation factor $\alpha>1$ for this prefix,
it suffices to prove
\begin{equation}\label{eq:target}
2Q+1-\overline S
\le
\alpha(Q+F).
\end{equation}

This is the main inequality analyzed in the remainder of the paper.
The quantities $Q$, $F$, and the distribution of $\mathsf R$ depend only
on the fixed machine and prefix and are therefore independent of the
random shift $\beta$; only the class-saving term $S_\beta$ varies with
the shift.

\subsection{A scale-dependent lower bound on the configuration moment}
\label{sec:variance}

The first resource in the target inequality \eqref{eq:target}, is the configuration second moment $F$.  The elementary inequality $F\ge Q$ uses only the individual squares inside each configuration.  We can obtain a stronger family of bounds by comparing every configuration load with a common level $\gamma\ge0$.

For $x\in\mathbb R$, write $x_+:=\max\{x,0\}$.  For a fixed comparison level $\gamma\ge0$, define
\[
  h_\gamma(r):=\frac{(r-\gamma)_+^2}{r},\qquad r>0.
\]
The function $h_\gamma$ measures the squared excess of a normalized size above $\gamma$, divided by the size itself.  The division by $r$ is tailored to the volume-biased distribution: since job $j$ is sampled with probability $\pi_j=z_{ij}r_j$, it cancels that extra factor of $r_j$.  Explicitly,
\[
\E[h_\gamma(\mathsf R)]
=\sum_{j\in[j^*]}\pi_j\frac{(r_j-\gamma)_+^2}{r_j} 
=\sum_{j\in[j^*]}z_{ij}(r_j-\gamma)_+^2
 =\sum_{\substack{j\in[j^*]\\ r_j\ge\gamma}}z_{ij}(r_j-\gamma)^2.
\]

\begin{theorem}[Scale-dependent configuration bound]
\label{thm:variance}
For every $\gamma\ge0$,
\[
F\ge 2\gamma-\gamma^2+\E[h_\gamma(\mathsf R)].
\]
\end{theorem}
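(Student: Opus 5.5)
The plan is to prove a pointwise inequality for each configuration $f$ and then average it under the configuration distribution $(y_{if})_f$, in the same way the elementary bound $F\ge Q$ was obtained. The starting point is the identity
\[
\sum_f y_{if}(T_f-\gamma)^2=F-2\gamma\sum_f y_{if}T_f+\gamma^2\sum_f y_{if}=F-2\gamma+\gamma^2 .
\]
This uses $\sum_f y_{if}T_f=1$ and $\sum_f y_{if}=1$ from \Cref{sec:normalize}. So the claim is equivalent to
\[
\sum_f y_{if}(T_f-\gamma)^2\ \ge\ \E[h_\gamma(\mathsf R)]=\sum_{j\in[j^*]}z_{ij}(r_j-\gamma)_+^2 .
\]

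The right-hand side can be rewritten by exchanging summations exactly as in the derivation of $F\ge Q$:
\[
\sum_{j\in[j^*]}z_{ij}(r_j-\gamma)_+^2=\sum_f y_{if}\sum_{j\in f\cap[j^*]}(r_j-\gamma)_+^2 .
\]
Since $y_{if}\ge0$, it then suffices to prove the following deterministic per-configuration inequality. For any finite multiset of positive reals $r_j$ with sum $T$, and any $\gamma\ge0$,
\[
(T-\gamma)^2\ \ge\ \sum_j (r_j-\gamma)_+^2 .
\]

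To prove this, set $a_j:=(r_j-\gamma)_+\ge0$ and let $K:=\{j: r_j>\gamma\}$.

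If $K=\emptyset$, the right-hand side is zero and there is nothing to show.

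Otherwise, positivity of all $r_j$ and $|K|\ge1$ give
\[
T-\gamma\ \ge\ \sum_{j\in K}r_j-\gamma\ =\ \sum_{j\in K}a_j+(|K|-1)\gamma\ \ge\ \sum_j a_j\ \ge 0 .
\]
Squaring, and using that the $a_j$ are nonnegative so that cross terms are nonnegative, yields
\[
(T-\gamma)^2\ \ge\ \Bigl(\sum_j a_j\Bigr)^2\ \ge\ \sum_j a_j^2 .
\]

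The only delicate point is this per-configuration step. One must compare $T-\gamma$ to the sum of the excesses rather than expand $(T-\gamma)^2$ directly. The argument also needs $T-\gamma\ge0$ whenever some excess is positive, which holds precisely because only one copy of $\gamma$ is subtracted while $|K|\ge1$ jobs exceed $\gamma$. Everything else is the same exchange of summation over $f\ni j$ used for $F\ge Q$. Specializing to $\gamma=0$ recovers $F\ge Q$, which serves as a sanity check on the computation.
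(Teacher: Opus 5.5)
Your proposal is correct and follows essentially the same route as the paper. You expand $\sum_f y_{if}(T_f-\gamma)^2=F-2\gamma+\gamma^2$ and prove $(T_f-\gamma)^2\ge\sum_{j\in f\cap[j^*]}(r_j-\gamma)_+^2$ for each configuration by bounding $T_f-\gamma$ below by the sum of excesses and dropping cross terms. You then exchange sums via $z_{ij}=\sum_{f\ni j}y_{if}$. Using the strict threshold $r_j>\gamma$ instead of the paper's $r_j\ge\gamma$ changes nothing, since jobs with $r_j=\gamma$ contribute zero excess.
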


\begin{proof}
Fix a configuration $f$ and let
\(
H_f:=\{j\in f\cap[j^*]:r_j\ge\gamma\}.
\)
If $H_f=\varnothing$, then $(T_f-\gamma)^2\ge0$ and there is nothing to show for this configuration.  Otherwise $|H_f|\ge1$.  Separating jobs above and below $\gamma$ gives
\begin{align*}
T_f-\gamma
&=\sum_{j\in H_f}r_j
  +\sum_{\substack{j\in f\cap[j^*]\\ r_j<\gamma}}r_j-\gamma 
=\sum_{j\in H_f}(r_j-\gamma)
  +( |H_f|-1)\gamma
  +\sum_{\substack{j\in f\cap[j^*]\\ r_j<\gamma}}r_j 
\ge \sum_{j\in H_f}(r_j-\gamma)\ge0.
\end{align*}
Every term discarded in the last line is nonnegative.  Squaring and then using $(\sum_q x_q)^2\ge\sum_qx_q^2$ for nonnegative $x_q$ yields
\(
(T_f-\gamma)^2
\ge\sum_{j\in H_f}(r_j-\gamma)^2.
\)
Multiply by $y_{if}$ and sum over all configurations.  The two sides simplify as follows:
\begin{align*}
\sum_fy_{if}(T_f-\gamma)^2
&=F-2\gamma\sum_fy_{if}T_f+\gamma^2\sum_fy_{if}
 =F-2\gamma+\gamma^2, \\
\sum_fy_{if}\sum_{j\in H_f}(r_j-\gamma)^2
&=\sum_{\substack{j\in[j^*]\\ r_j\ge\gamma}}
(r_j-\gamma)^2\sum_{f\ni j}y_{if} 
=\sum_{\substack{j\in[j^*]\\ r_j\ge\gamma}}z_{ij}(r_j-\gamma)^2
 =\E[h_\gamma(\mathsf R)].
\end{align*}
For the first identity we used $\sum_fy_{if}T_f=1$ and $\sum_fy_{if}=1$.  For the second, we exchanged two finite sums and then used $z_{ij}=\sum_{f\ni j}y_{if}$.  Consequently
\(
F-2\gamma+\gamma^2\ge\E[h_\gamma(\mathsf R)],
\)
which proves the theorem.
\end{proof}

The parameter $\gamma$ determines the size threshold at which the strengthened lower bound on $F$ places additional emphasis.  Two endpoint choices recover familiar bounds.  When $\gamma=0$, we have
\(
h_0(r)=r,
\)
and the theorem gives
\(
F\ge \E[\mathsf R]=Q,
\)
which is exactly the elementary bound established above.  When $\gamma=1$, the theorem gives
\(
F\ge 1+\E[h_1(\mathsf R)],
\)
so, in addition to the baseline bound $F\ge1$, it captures an extra contribution from normalized job sizes exceeding $1$.

More generally, choosing $\gamma$ between $0$ and $1$ allows the lower bound on $F$ to focus on a particular range of normalized job sizes.  This flexibility is useful because the cross-shift saving bound is not equally strong for all values of $\mathsf R$.  We therefore choose $\gamma$ jointly with the cross-shift certificate so that the strengthened configuration bound compensates for the size range where the saving estimate is weakest.  The explicit certificate used later takes
\(
\gamma=\frac{89}{100}.
\)

\subsection{A cross-shift lower bound on the class saving}
\label{sec:saving}

The second quantity that can help satisfy the target inequality
\eqref{eq:target} is the average class saving $\overline S$.  For a fixed
shift $\beta$, recall that
\(
S_\beta
=
\sum_{k\in\mathbb Z}\min\{v_k,\tau_k\}^2.
\) and 
\(
\overline S
:=
\E_\beta[S_\beta]\).
Thus, the contribution of a single class with normalized volume $v$ and
normalized lower boundary $\tau$ is
\(
\min\{v,\tau\}^2.
\)
This expression is nonlinear in $v$, which makes it difficult to average
directly over the random shift because both the class boundaries and the
corresponding class volumes change with $\beta$.

Our first step is therefore to lower-bound $\min\{v,\tau\}^2$ by an
expression that is affine in $v$.  Such a bound has the form
\(
2\lambda v-\lambda^2,
\)
where $\lambda$ will later be chosen as a function of the class scale.
The linear dependence on $v$ is crucial: after summing over classes and
averaging over the random shift, the terms involving $v$ can be
reorganized according to the normalized job sizes.  This is the
mechanism that will convert the collection of randomly shifted classes
into a one-dimensional bound.

\subsubsection{An affine lower bound for one class}

Fix a normalized class boundary $\tau>0$ and a normalized class volume
$v\in[0,1]$.  The saving contributed by this class is
\(
\min\{v,\tau\}^2.
\)
To average this quantity over the random class shift, it is useful to
replace it by a lower bound that is affine in $v$.

For a parameter $\lambda\ge0$, consider
\(
2\lambda v-\lambda^2.
\)
This is the tangent line to $v^2$ at $v=\lambda$, and
\(
v^2-(2\lambda v-\lambda^2)=(v-\lambda)^2\ge0.
\)
Hence it is always a lower bound for $v^2$.

When $0<\tau<1$, however, the class-saving function is capped:
\[
\min\{v,\tau\}^2
=
\begin{cases}
v^2, & 0\le v\le\tau,\\
\tau^2, & \tau\le v\le1.
\end{cases}
\]
Thus, the tangent line must also remain below the plateau $\tau^2$.
Since $2\lambda v-\lambda^2$ is nondecreasing in $v$ for
$\lambda\ge0$, it is enough to require
\(
2\lambda-\lambda^2\le\tau^2.
\)
For $0<\tau<1$, the largest $\lambda\in[0,1]$ satisfying this condition is
\(
1-\sqrt{1-\tau^2}.
\)
Accordingly, define
\[
b(\tau):=
\begin{cases}
1-\sqrt{1-\tau^2}, & 0<\tau<1,\\[2mm]
1, & \tau\ge1.
\end{cases}
\]
For $0<\tau<1$, $b(\tau)$ is the largest admissible tangent parameter
that keeps the affine bound below the plateau.  When $\tau\ge1$, we
have $\min\{v,\tau\}^2=v^2$ for all $v\in[0,1]$, so no additional
restriction is needed beyond $\lambda\le1$.
Figure~\ref{fig:affine-class-bound} illustrates the nontrivial case
$0<\tau<1$.  

\begin{figure}[htbp]
\centering
\begin{tikzpicture}[x=7cm,y=5cm]
    \def\taudraw{0.62}
    \pgfmathsetmacro{\lamdraw}{1-sqrt(1-\taudraw*\taudraw)}
    \pgfmathsetmacro{\plateau}{\taudraw*\taudraw}

    \draw[->] (0,-0.06) -- (1.08,-0.06) node[right] {$v$};
    \draw[->] (0,-0.06) -- (0,0.50) node[above] {saving};

    \draw[thick,domain=0:\taudraw,samples=100]
        plot (\x,{\x*\x});
    \draw[thick]
        (\taudraw,\plateau) -- (1,\plateau);

    \draw[dashed,thick,domain=0:1,samples=2]
        plot (\x,{2*\lamdraw*\x-\lamdraw*\lamdraw});

    \draw[dotted]
        (\taudraw,-0.06) -- (\taudraw,\plateau);
    \draw[dotted]
        (\lamdraw,-0.06) -- (\lamdraw,{\lamdraw*\lamdraw});

    \fill (\lamdraw,{\lamdraw*\lamdraw}) circle (0.8pt);
    \fill (1,\plateau) circle (0.8pt);

    \node[below] at (\lamdraw,-0.06) {$\lambda=b(\tau)$};
    \node[below] at (\taudraw,-0.06) {$\tau$};
    \node[below] at (1,-0.06) {$1$};

    \node[anchor=west] at (0.66,0.45)
        {$\min\{v,\tau\}^2$};
    \node[anchor=west] at (0.48,0.12)
        {$2\lambda v-\lambda^2$};

\end{tikzpicture}
\caption{Affine lower bound for a class with $0<\tau<1$.
The solid curve is $\min\{v,\tau\}^2$.  For the maximal admissible
choice $\lambda=b(\tau)$, the dashed line is tangent to $v^2$ at
$v=\lambda$ and reaches the plateau $\tau^2$ at $v=1$.}
\label{fig:affine-class-bound}
\end{figure}

\begin{lemma}\label{lem:tangent}
For every $\tau>0$, $v\in[0,1]$, and
$0\le\lambda\le b(\tau)$, we have
\(
\min\{v,\tau\}^2
\ge
2\lambda v-\lambda^2.
\)
\end{lemma}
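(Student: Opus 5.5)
The plan is a case split on $v$ relative to $\tau$, using in both cases that $\lambda\ge 0$ makes $g(v):=2\lambda v-\lambda^2$ nondecreasing in $v$. The basic tool is the tangent-line identity $v^2-(2\lambda v-\lambda^2)=(v-\lambda)^2\ge0$, which holds for every real $\lambda$. This already settles the case $v\le\tau$, where $\min\{v,\tau\}^2=v^2\ge g(v)$. In particular, if $\tau\ge1$ then $v\le1\le\tau$ for all $v\in[0,1]$, so this case covers everything and the constraint $\lambda\le b(\tau)=1$ is not even needed.

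It remains to treat $0<\tau<1$ and $\tau<v\le1$, where the left side is the plateau value $\tau^2$. By monotonicity of $g$ and $v\le1$, we have $g(v)\le g(1)=2\lambda-\lambda^2$. So it suffices to show $2\lambda-\lambda^2\le\tau^2$ whenever $0\le\lambda\le b(\tau)=1-\sqrt{1-\tau^2}$.

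Write $2\lambda-\lambda^2=1-(1-\lambda)^2$; this is increasing in $\lambda$ on $[0,1]$. Note that $b(\tau)\in(0,1)$ for $0<\tau<1$, so every admissible $\lambda$ lies in $[0,1]$. Hence $2\lambda-\lambda^2\le 1-(1-b(\tau))^2=1-(1-\tau^2)=\tau^2$, as required.

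The only step requiring any care is this last monotonicity argument on $[0,1]$, which confirms that the whole interval $[0,b(\tau)]$, and not just its endpoint, is admissible. Beyond that the argument is elementary, and I expect no real obstacle. Combining the two cases gives $\min\{v,\tau\}^2\ge 2\lambda v-\lambda^2$ for all $v\in[0,1]$.
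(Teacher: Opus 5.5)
Your proof is correct and matches the paper's argument: the same case split on $v$ versus $\tau$, the tangent identity $v^2-(2\lambda v-\lambda^2)=(v-\lambda)^2\ge0$ when $v\le\tau$, and, in the plateau case with $0<\tau<1$, monotonicity in $v$ to reduce to $2\lambda-\lambda^2\le\tau^2$. Your check that $1-(1-\lambda)^2$ is increasing on $[0,1]$ spells out a step the paper leaves implicit when it says the bound ``follows from $\lambda\le b(\tau)$.''
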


\begin{proof}
If $v\le\tau$, then $\min\{v,\tau\}^2=v^2$, and
\(
v^2-(2\lambda v-\lambda^2)
=
(v-\lambda)^2
\ge0.
\)
Suppose now that $v\ge\tau$.  This case requires attention only when
$\tau<1$, for which $\min\{v,\tau\}^2=\tau^2$.  Since
$v\le1$ and $\lambda\ge0$,
\(
2\lambda v-\lambda^2
\le
2\lambda-\lambda^2
\le
\tau^2,
\)
where the last inequality follows from $\lambda\le b(\tau)$.
Therefore,
\(
2\lambda v-\lambda^2
\le
\min\{v,\tau\}^2,
\)
which completes the proof.
\end{proof}

\subsubsection{A scale profile and the unit-window function}

Lemma~\ref{lem:tangent} allows the affine parameter $\lambda$ to depend
on the class threshold.  Since the class thresholds are geometric, it is
convenient to parameterize them on the logarithmic scale.  Write
\(
s:=\log_\rho \tau\),
so that
\(\tau=\rho^s.
\)

A \emph{scale profile} is a measurable function
$\psi:\mathbb R\to\mathbb R_{\ge0}$ satisfying
\[
0\le \psi(s)\le b(\rho^s),~ s\in\mathbb R,
\text{ and }
\Pi:=\int_{-\infty}^{\infty}\psi(s)^2\,ds<\infty.
\]
At logarithmic scale $s$, we use
\(
\lambda=\psi(s)
\)
in Lemma~\ref{lem:tangent}.  The cap condition guarantees that this is
a valid choice for the class threshold $\tau=\rho^s$.
The quantity $\Pi$ records the total cost associated with these affine
lower bounds.  Indeed, each application of Lemma~\ref{lem:tangent}
contributes a subtractive term $\lambda^2$, and averaging these terms
over all shifts will produce the integral $\Pi$.  The explicit profile
used in the final certificate is particularly simple: it is piecewise
constant and has finite support.

We next describe how a fixed normalized job size interacts with the
profile.  Let $r>0$ and set
\(
u:=\log_\rho r.
\)
A geometric class whose lower boundary is $\rho^s$ contains $r$ exactly
when
\(
\rho^s\le r<\rho^{s+1},
\)
or equivalently,
\(
u-1<s\le u.
\)
Thus, over one complete shift of the geometric grid, a job of normalized
size $r$ encounters precisely the profile values on the unit interval
$(u-1,u]$.  This motivates the \emph{unit-window function}
\[
\mathcal W(r)
:=
2\int_{\log_\rho r-1}^{\log_\rho r}\psi(s)\,ds,
\qquad r>0.
\]
The factor $2$ comes from the linear term $2\lambda v$ in
Lemma~\ref{lem:tangent}.  Hence $\mathcal W(r)$ is the total linear
coefficient associated with a normalized size $r$ after averaging over
one full logarithmic shift.

It is useful to express the normalized geometric grid directly in
logarithmic coordinates.  Write
\(
\log_\rho(\beta/V)=z+\theta\),
\(
z\in\mathbb Z,\quad \theta\in[0,1).
\)
The integer $z$ only reindexes the classes.  Thus, after reindexing, the
normalized class boundaries are
\(
\tau_k=\rho^{k+\theta}\),
\(k\in\mathbb Z.
\)
Equivalently, in the logarithmic coordinate
\(
u:=\log_\rho r,
\)
the classes are the unit intervals
\(
[k+\theta,k+\theta+1).
\)

Because the physical prefix is normalized before the random shift is
sampled, the normalized sizes and configuration moments remain fixed while
the logarithmic class grid translates across them.
\Cref{fig:cross-shift} illustrates this cross-shift viewpoint and the origin
of the unit-window transform used below.

\begin{figure}[htbp]
\centering
\resizebox{0.98\linewidth}{!}{%
\begin{tikzpicture}[
    x=0.92cm,
    y=0.90cm,
    >=Latex,
    job/.style={circle,fill=black,inner sep=1.45pt},
    boundary/.style={densely dashed,thin},
    guide/.style={densely dotted,black!35}
]

\node[font=\bfseries] at (4.25,5.55)
    {(a) One fixed prefix under shifted class grids};

\draw[->,thick]
    (0,4.60) -- (8.65,4.60)
    node[right] {$u=\log_\rho r$};

\def\xone{0.95}
\def\xtwo{2.05}
\def\xthree{3.35}
\def\xfour{4.55}
\def\xfive{6.00}
\def\xsix{7.15}

\foreach \x/\lab in {
    \xone/1,
    \xtwo/2,
    \xthree/3,
    \xfour/4,
    \xfive/5,
    \xsix/6
}{
    \draw[guide]
        (\x,1.15) -- (\x,4.45);

    \node[job]
        at (\x,4.60) {};

    \node[above=2pt,font=\scriptsize]
        at (\x,4.60)
        {$u_{\lab}$};
}

\foreach \y/\off/\phase in {
    3.45/0.22/1,
    2.45/0.55/2,
    1.45/0.86/3
}{
    \draw[thin]
        (0,\y) -- (8.45,\y);

    \node[anchor=east,font=\small]
        at (-0.18,\y)
        {$\theta_{\phase}$};

    \foreach \n in {0,...,8}{
        \pgfmathsetmacro{\bx}{\off+\n}
        \draw[boundary]
            (\bx,\y-0.28) -- (\bx,\y+0.28);
    }

    \foreach \x in {
        \xone,\xtwo,\xthree,\xfour,\xfive,\xsix
    }{
        \node[job] at (\x,\y) {};
    }
}

\draw[<->,thin]
    (0.22,3.86) -- (1.22,3.86)
    node[midway,above=1pt,font=\scriptsize]
    {$1$};

\node[anchor=west,font=\scriptsize]
    at (1.38,3.86)
    {class width in $u$-space};

%
\begin{scope}[xshift=2cm]

\node[font=\bfseries] at (12.20,5.55)
    {(b) Averaging for one fixed normalized size};

\draw[->,thick]
    (9.35,4.20) -- (15.40,4.20)
    node[right] {$u$};

\coordinate (S)  at (10.55,4.20);
\coordinate (U)  at (12.95,4.20);
\coordinate (SP) at (13.35,4.20);

\draw[very thick]
    (S) -- (SP);

\draw[thin]
    (10.55,3.98) -- (10.55,4.42);

\draw[thin]
    (13.35,3.98) -- (13.35,4.42);

\node[below=3pt,font=\scriptsize]
    at (S)
    {$s=k+\theta$};

\node[below=3pt,font=\scriptsize]
    at (SP)
    {$s+1$};

\node[job]
    at (U) {};

\node[above=3pt,font=\scriptsize]
    at (U)
    {$u_j$};

\node[above=9pt,font=\scriptsize]
    at (11.95,4.20)
    {$s\le u_j<s+1$};

\draw[->,thick]
    (9.45,2.15) -- (15.10,2.15);

\coordinate (UM) at (10.15,2.15);
\coordinate (UJ) at (12.95,2.15);

\draw[very thick]
    (UM) -- (UJ);

\draw[thin]
    (10.15,1.93) -- (10.15,2.37);

\draw[thin]
    (12.95,1.93) -- (12.95,2.37);

\node[below=3pt,font=\scriptsize]
    at (UM)
    {$u_j-1$};

\node[below=3pt,font=\scriptsize]
    at (UJ)
    {$u_j$};

\draw[->,thin]
    (10.45,2.62) -- (12.65,2.62)
    node[midway,above=1pt,font=\scriptsize]
    {$s=k+\theta$ as $\theta$ ranges over $[0,1)$};


\end{scope}

\end{tikzpicture}%
}

\caption{
Cross-shift averaging in logarithmic coordinates.
(a) The normalized job sizes $u_j=\log_\rho r_j$ are fixed, while the class
boundaries $k+\theta$ translate with the random phase.
(b) For a fixed $u_j$, the lower boundary $s=k+\theta$ of its containing
class ranges over $u_j-1<s\le u_j$, giving the window
$\mathcal W(r_j)=2\int_{u_j-1}^{u_j}\psi(s)\,ds$.
The job locations are schematic.
}
\label{fig:cross-shift}
\end{figure}

\begin{lemma}[Cross-shift saving bound]
\label{lem:saving}
For the fixed prefix and every scale profile $\psi$ defined above, we have
\(
\overline S
\ge
\E[\mathcal W(\mathsf R)]-\Pi.
\)
\end{lemma}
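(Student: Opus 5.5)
We fix the shift phase $\theta\in[0,1)$ and, for each class $k$, apply Lemma~\ref{lem:tangent} with $\tau=\tau_k=\rho^{k+\theta}$, $v=v_k\in[0,1]$ and $\lambda=\psi(k+\theta)$. This choice is admissible because $0\le\psi(k+\theta)\le b(\rho^{k+\theta})=b(\tau_k)$ by the cap condition on the scale profile. Summing over $k$ gives the pointwise bound
\[
S_\beta\ \ge\ \sum_{k\in\mathbb Z}\bigl(2\psi(k+\theta)v_k-\psi(k+\theta)^2\bigr).
\]
The quadratic part of the sum is finite for almost every $\theta$, since $\int_0^1\sum_k\psi(k+\theta)^2\,d\theta=\Pi<\infty$. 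The linear part is a sum of nonnegative terms, so splitting the sum is legitimate (Tonelli), at least for almost every $\theta$, which suffices for the expectation.

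Next we rewrite the linear part jobwise. Using $v_k=\Pr(\tau_k\le\mathsf R<\rho\tau_k)=\sum_{j}\pi_j\mathbf 1[k+\theta\le u_j<k+\theta+1]$ with $u_j=\log_\rho r_j$, we get
\[
\sum_k 2\psi(k+\theta)v_k=\sum_{j\in[j^*]}\pi_j\,2\psi(s_j(\theta)),
\]
where $s_j(\theta)$ is the unique point of $\theta+\mathbb Z$ in $(u_j-1,u_j]$. Here we use that $u_j$ lies in exactly one class interval $[k+\theta,k+\theta+1)$; this is equivalent to $k+\theta\in(u_j-1,u_j]$.

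We then average over the shift. Since $\ln\beta$ is uniform on $[0,\ln\rho)$, $\log_\rho\beta$ is uniform on $[0,1)$. Hence $\theta$, the fractional part of $\log_\rho\beta-\log_\rho V$, is uniform on $[0,1)$, because a translate mod $1$ of a uniform variable is uniform. For a fixed $u_j$, the map $\theta\mapsto s_j(\theta)$ is a measure-preserving bijection from $[0,1)$ onto the unit window $(u_j-1,u_j]$, namely a translation mod $1$. Therefore
\[
\E_\theta[2\psi(s_j(\theta))]=2\int_{u_j-1}^{u_j}\psi(s)\,ds=\mathcal W(r_j).
\]
Summing with weights $\pi_j$ over the finitely many jobs gives $\E[\mathcal W(\mathsf R)]$.

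For the penalty term, Tonelli gives $\int_0^1\sum_{k\in\mathbb Z}\psi(k+\theta)^2\,d\theta=\int_{\mathbb R}\psi(s)^2\,ds=\Pi$. Taking $\E_\beta$ of the pointwise bound then yields $\overline S\ge\E[\mathcal W(\mathsf R)]-\Pi$.

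The only delicate step is the bookkeeping in the shift averaging. One must check that the reindexing by the integer $z$ does not matter: it merely relabels classes, and the sum over $k$ is invariant. One must also check that the right-open class convention matches the half-open window $(u-1,u]$ used in the definition of $\mathcal W$. Both checks are measure-zero-insensitive and straightforward.
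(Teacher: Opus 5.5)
Your proof is correct and follows the paper's argument essentially step for step. You apply Lemma~\ref{lem:tangent} classwise with $\lambda=\psi(k+\theta)$, rewrite the linear term jobwise so that each job sees $\psi$ exactly once on its window $(u_j-1,u_j]$ as $\theta$ ranges over $[0,1)$, and integrate the penalty term to $\Pi$. Your extra points (the Tonelli and almost-everywhere justification, and the explicit check that $\theta$ is uniform) are sound refinements of details the paper leaves implicit.
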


\begin{proof}
Recall the logarithmic phase $\theta\in[0,1)$ defined above.  After the
integer reindexing of the classes, the normalized class boundaries are
\(
\tau_k=\rho^{k+\theta}\),
\(k\in\mathbb Z,
\)
and $\theta$ is uniform on $[0,1)$.

For this shifted grid, define
\(
v_k(\theta)
:=
\Pr\!\left(
\rho^{k+\theta}\le \mathsf R<\rho^{k+\theta+1}
\right).
\)
Then
\(
S_\beta
=
\sum_{k\in\mathbb Z}
\min\{v_k(\theta),\rho^{k+\theta}\}^2.
\)
At the threshold $\rho^{k+\theta}$, choose the affine parameter
\(
\lambda=\psi(k+\theta).
\)
The definition of a scale profile gives
\(
0\le\psi(k+\theta)\le b(\rho^{k+\theta}),
\)
so Lemma~\ref{lem:tangent} applies to every class.  Summing its bound
over $k$ yields
\begin{align*}
S_\beta
&\ge
\sum_{k\in\mathbb Z}
\left(
2\psi(k+\theta)v_k(\theta)
-
\psi(k+\theta)^2
\right) 
=
2\sum_{k\in\mathbb Z}
\psi(k+\theta)v_k(\theta)
-
\sum_{k\in\mathbb Z}
\psi(k+\theta)^2.
\end{align*}

We now average the two terms over the uniform phase
$\theta\in[0,1)$.
%
For the second term,
\begin{align*}
\int_0^1
\sum_{k\in\mathbb Z}\psi(k+\theta)^2\,d\theta
&=
\sum_{k\in\mathbb Z}
\int_0^1\psi(k+\theta)^2\,d\theta 
=
\sum_{k\in\mathbb Z}
\int_k^{k+1}\psi(s)^2\,ds 
=
\int_{-\infty}^{\infty}\psi(s)^2\,ds 
=
\Pi.
\end{align*}

It remains to average the linear term.  By the definition of
$\mathsf R$,
\[
v_k(\theta)
=
\sum_{j\in[j^*]}
\pi_j\,
\mathbf 1
\left\{
\rho^{k+\theta}
\le r_j
<
\rho^{k+\theta+1}
\right\}.
\]
Therefore,
\begin{align*}
&\int_0^1
2\sum_{k\in\mathbb Z}
\psi(k+\theta)v_k(\theta)\,d\theta 
=
\sum_{j\in[j^*]}\pi_j
\int_0^1
2\sum_{k\in\mathbb Z}
\psi(k+\theta)
\mathbf 1
\left\{
\rho^{k+\theta}
\le r_j
<
\rho^{k+\theta+1}
\right\}
\,d\theta.
\end{align*}

Fix a job $j$ and write
\(
u_j:=\log_\rho r_j.
\)
The indicator is equal to one exactly when
\(
k+\theta\le u_j<k+\theta+1,
\)
or, with $s:=k+\theta$,
\(
u_j-1<s\le u_j.
\)
As $\theta$ ranges over $[0,1)$, the corresponding values of $s$ in this
interval are encountered exactly once.  Hence
\begin{align*}
&\int_0^1
2\sum_{k\in\mathbb Z}
\psi(k+\theta)
\mathbf 1
\left\{
\rho^{k+\theta}
\le r_j
<
\rho^{k+\theta+1}
\right\}
\,d\theta 
=
2\int_{u_j-1}^{u_j}\psi(s)\,ds
=
\mathcal W(r_j).
\end{align*}
Substituting this identity gives
\begin{align*}
\int_0^1
2\sum_{k\in\mathbb Z}
\psi(k+\theta)v_k(\theta)\,d\theta
&=
\sum_{j\in[j^*]}
\pi_j\mathcal W(r_j) 
=
\E[\mathcal W(\mathsf R)].
\end{align*}

Combining the averaged linear term with the profile cost yields
\(
\overline S
\ge
\E[\mathcal W(\mathsf R)]-\Pi,
\)
as claimed.
\end{proof}

\subsection{Master certificate}
\label{sec:master}

We now combine the two lower bounds developed above.  By
Theorem~\ref{thm:variance}, the configuration moment satisfies
\(
F\ge 2\gamma-\gamma^2+\E[h_\gamma(\mathsf R)],
\)
while Lemma~\ref{lem:saving} gives the averaged class-saving bound
\(
\overline S\ge \E[\mathcal W(\mathsf R)]-\Pi.
\)
The target inequality \eqref{eq:target} can therefore be proved by
combining the contributions of $h_\gamma$ and $\mathcal W$ for each
possible normalized size.  The following theorem formalizes this
reduction.

\begin{theorem}[Cross-shift master certificate]
\label{thm:master}
Fix $\rho>1$, a comparison level $\gamma\ge0$, and a target factor
$\alpha>1$.  Let $\psi$ be a scale profile with profile cost $\Pi$ and
unit-window function $\mathcal W$.  Suppose that there exists
$t\in\mathbb R$ such that
\begin{align}
\mathcal W(r)
+\alpha h_\gamma(r)
-(2-\alpha)r
&\ge t,
\qquad r>0,
\label{eq:master-pointwise}\\
t+\alpha(2\gamma-\gamma^2)-1-\Pi
&\ge 0.
\label{eq:master-scalar}
\end{align}
Then
\(
2Q+1-\overline S\le\alpha(Q+F),
\)
and hence the target prefix inequality \eqref{eq:target} holds.
\end{theorem}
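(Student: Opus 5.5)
The plan is to take the expectation of the pointwise inequality \eqref{eq:master-pointwise} under the volume-biased distribution of $\mathsf R$, and then substitute the two lower bounds already established for $F$ and $\overline S$. Since $\mathsf R$ takes only the finitely many positive values $r_j$ with probabilities $\pi_j$ summing to one, we can evaluate \eqref{eq:master-pointwise} at $r=r_j$, multiply by $\pi_j$, and sum. Each expectation is a finite sum, and $\mathcal W(r_j)$ is finite because $\psi\le b\le 1$. This gives
\[
\E[\mathcal W(\mathsf R)]+\alpha\,\E[h_\gamma(\mathsf R)]-(2-\alpha)\E[\mathsf R]\ge t .
\]
We then recall from \Cref{sec:normalize} that $\E[\mathsf R]=Q$.

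Next we rewrite the target. Inequality \eqref{eq:target} is equivalent to
\[
\overline S+\alpha F-(2-\alpha)Q-1\ge 0 .
\]
For $\overline S$ we use Lemma~\ref{lem:saving}, which gives $\overline S\ge \E[\mathcal W(\mathsf R)]-\Pi$. For $F$ we use Theorem~\ref{thm:variance}, which gives $\alpha F\ge \alpha(2\gamma-\gamma^2)+\alpha\E[h_\gamma(\mathsf R)]$; multiplying by $\alpha$ keeps the direction because $\alpha>1>0$. Substituting both bounds, the left-hand side is at least
\[
\Bigl(\E[\mathcal W(\mathsf R)]+\alpha\E[h_\gamma(\mathsf R)]-(2-\alpha)Q\Bigr)+\alpha(2\gamma-\gamma^2)-1-\Pi .
\]
By the averaged pointwise inequality, this is at least $t+\alpha(2\gamma-\gamma^2)-1-\Pi$, which is nonnegative by \eqref{eq:master-scalar}. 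That yields \eqref{eq:target}.

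The argument is essentially bookkeeping, so there is no substantial obstacle. The only point to watch is the sign of the coefficient $2-\alpha$. No sign assumption on it is needed, because the term $(2-\alpha)Q$ appears identically in the averaged certificate and in the rearranged target, so it is matched exactly rather than bounded. We should also note that the bounds on $F$ and $\overline S$ enter with positive coefficients ($\alpha$ and $1$), so substituting lower bounds for them is legitimate.
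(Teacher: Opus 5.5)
Your proposal is correct and follows essentially the same route as the paper. Both average the pointwise condition over $\mathsf R$ using $Q=\E[\mathsf R]$, substitute the lower bounds from Lemma~\ref{lem:saving} and Theorem~\ref{thm:variance}, and close with the scalar condition \eqref{eq:master-scalar}; the only difference is that you rearrange the target into $\overline S+\alpha F-(2-\alpha)Q-1\ge 0$ first, whereas the paper chains the inequalities and invokes the bound on $F$ last.
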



The theorem is the point at which the original high-dimensional structure
disappears.  Configuration information is represented by the single moment
$F$, while the random geometric classes enter only through the one-dimensional
window function $\mathcal W$.  Hence the complete prefix comparison reduces
to the pointwise condition~\eqref{eq:master-pointwise} and the scalar
condition~\eqref{eq:master-scalar}.

\begin{proof}
Taking expectation of the pointwise inequality \eqref{eq:master-pointwise} with respect to
$\mathsf R$ and using $Q=\E[\mathsf R]$ gives
\(
\E[\mathcal W(\mathsf R)]
+\alpha\E[h_\gamma(\mathsf R)]
-(2-\alpha)Q
\ge t.
\)
By Lemma~\ref{lem:saving},
\(
\overline S
\ge
\E[\mathcal W(\mathsf R)]-\Pi,
\)
and therefore
\(
\E[\mathcal W(\mathsf R)]
\le
\overline S+\Pi.
\)
Substituting this upper bound into the preceding inequality yields
\(
\overline S+\Pi
+\alpha\E[h_\gamma(\mathsf R)]
-(2-\alpha)Q
\ge t.
\)

The scalar condition \eqref{eq:master-scalar} can be rewritten as
\(
t-\Pi
\ge
1-\alpha(2\gamma-\gamma^2).
\)
Combining these two inequalities gives
\(
\overline S
+\alpha\E[h_\gamma(\mathsf R)]
-(2-\alpha)Q
\ge
1-\alpha(2\gamma-\gamma^2),
\)
or, equivalently,
\begin{align*}
2Q+1-\overline S
&\le
\alpha\left(
Q+2\gamma-\gamma^2+\E[h_\gamma(\mathsf R)]
\right).
\end{align*}

Finally, Theorem~\ref{thm:variance} implies
\(
2\gamma-\gamma^2+\E[h_\gamma(\mathsf R)]
\le F.
\)
Hence
\(
2Q+1-\overline S
\le
\alpha(Q+F),
\)
which is exactly \eqref{eq:target}.
\end{proof}

Theorem~\ref{thm:master} is the bridge from the analytic bounds to the
explicit certificate.  To prove the desired approximation factor, it
remains to choose $\rho$, $\gamma$, $\alpha$, and a scale profile
$\psi$ for which the two conditions of the theorem can be verified.

\subsection{Verification of the \texorpdfstring{$1.3168$}{1.3168} certificate}
\label{sec:certificate}

We now certify the two conditions of Theorem~\ref{thm:master} with
\[
\rho=\frac{53}{25},
\qquad
\gamma=\frac{89}{100},
\qquad
\alpha=\frac{823}{625}=1.3168,
\qquad
t=-\frac{107}{500}=-0.214.
\]
The structural analysis leading to Theorem~\ref{thm:master} is entirely
analytic.  
The overall dependency structure, including the location of the
computer-assisted step, is summarized in \Cref{app:proof-map}.
Computer assistance is used only in this final step to verify
an explicit finite certificate.  This is related in spirit to Li's
computer-assisted analysis \citep[Section~4]{Li2025}, but the
verification problem is different.  Li verifies bounds for a
size-configuration optimization problem, whereas our preceding analysis
has already reduced the ratio proof to the cap condition for a scale
profile, the scalar condition~\eqref{eq:master-scalar}, and the
one-dimensional pointwise condition~\eqref{eq:master-pointwise}.

For reproducibility, Appendices~\ref{app:certificate} and~\ref{app:verifier}
provide both the complete rational certificate 
and a compact version of the verification program.
The search procedure is used only for discovery.
At a high level, it discretizes the logarithmic scale, searches jointly over
admissible profile heights and the parameters of the master certificate, and
then rationalizes a feasible profile.  The proof itself does not rely on the
search: correctness is established solely by the explicit rational profile
and the independent verification described below.

The selected values of $\rho$, $\gamma$, and $\alpha$ are certified
feasible parameters; we do not claim that they are globally optimal within
the master-certificate framework.

Let
\(
\Delta:=\frac1{80}.
\)
For each integer $\ell$, let $\psi_\ell$ be the value listed in
\Cref{tab:certificate}, and set $\psi_\ell=0$ for every unlisted
index.  Define the step profile
\(
\psi(s):=\psi_\ell\)
for
\(s\in[\ell\Delta,(\ell+1)\Delta).
\)
The profile has 232 nonzero cells, indexed by
$\ell\in \{-157,\ldots,74\}$.  Each listed six-decimal height is interpreted
as the exact rational number with denominator $10^6$.

We first check the parts of the certificate that can be evaluated
exactly.  Since every profile cell has width $1/80$,
\begin{align*}
\Pi
&=
\int_{-\infty}^{\infty}\psi(s)^2\,ds 
=
\frac1{80}\sum_{\ell=-157}^{74}\psi_\ell^2 
=
\frac{1{,}735{,}592{,}044{,}371}
     {20{,}000{,}000{,}000{,}000}
=
0.08677960221855.
\end{align*}
Also,
\(
2\gamma-\gamma^2=\frac{9879}{10000},
\)
and therefore the left-hand side of the scalar condition
\eqref{eq:master-scalar} is
\begin{align*}
t+\alpha(2\gamma-\gamma^2)-1-\Pi
&=
\frac{1{,}742{,}355{,}629}
     {20{,}000{,}000{,}000{,}000} 
>0.
\end{align*}
Hence the scalar condition is verified exactly.

We next verify that $\psi$ satisfies the scale-profile cap
\(
\psi(s)\le b(\rho^s)\),
\(s\in\mathbb R\).
For $s\ge0$, we have $\rho^s\ge1$ and hence $b(\rho^s)=1$;
all corresponding profile heights are below $1$.  For a negative cell
\(
s\in
\left[\frac{\ell}{80},\frac{\ell+1}{80}\right),
\)
the function $b(\rho^s)$ is increasing in $s$.  Its minimum on the cell
is therefore attained at the left endpoint, so it suffices to verify
\(
\psi_\ell
\le
1-\sqrt{1-\rho^{\ell/40}}\),
\(
\ell\in \{-157,\ldots,-1\}.
\)
Directed interval arithmetic certifies all of these inequalities.  The
smallest certified cap slack occurs at $\ell=-134$ and is greater than
\(
2.1049724246\times10^{-6}.
\)

It remains to verify the pointwise condition
\eqref{eq:master-pointwise}.  Introduce the logarithmic variable
\(
u:=\log_\rho r\),
so that
\(r=\rho^u,
\)
and define
\(
W(u)
:=
\mathcal W(\rho^u)
=
2\int_{u-1}^{u}\psi(s)\,ds.
\)
Since $\psi$ is constant on cells of width $\Delta$, the function $W$
is affine on each interval
\(
\left[q\Delta,(q+1)\Delta\right].
\)
More precisely, for
$u\in[q\Delta,(q+1)\Delta]$,
\begin{align*}
W(q\Delta)
&=
2\Delta\sum_{\ell=q-80}^{q-1}\psi_\ell,\\
W(u)
&=
W(q\Delta)
+
2(u-q\Delta)(\psi_q-\psi_{q-80}),
\end{align*}
where unlisted profile heights are zero.
To express the pointwise master condition in the variable $u$, define
\[
G(u)
:=
W(u)
+\alpha h_\gamma(\rho^u)
-(2-\alpha)\rho^u.
\]
Because the map $u\mapsto\rho^u$ is a bijection from $\mathbb R$ to
$(0,\infty)$, condition~\eqref{eq:master-pointwise} is exactly
\(
G(u)\ge t=-0.214\)
for every \(u\in\mathbb R.
\)
Using the definition of $h_\gamma$, we have
\[
G(u)
=
\begin{cases}
W(u)-(2-\alpha)\rho^u,
& \rho^u<\gamma,\\[1mm]
W(u)
+2(\alpha-1)\rho^u
+\alpha\gamma^2\rho^{-u}
-2\alpha\gamma,
& \rho^u\ge\gamma.
\end{cases}
\]

The profile $\psi$ is nonzero exactly on the cells indexed by
$\ell\in \{-157,\ldots,74\}$, where each cell is
\(
\left[\frac{\ell}{80},\frac{\ell+1}{80}\right).
\)
Hence the union of all nonzero cells is
\(
\bigcup_{\ell=-157}^{74}
\left[\frac{\ell}{80},\frac{\ell+1}{80}\right)
=
\left[-\frac{157}{80},\frac{75}{80}\right).
\)
Therefore, $\psi(s)=0$ outside
\(
\left[-\frac{157}{80},\frac{75}{80}\right).
\)
It follows that the unit window $[u-1,u]$ contains no point at which
$\psi$ is nonzero whenever
\(
u\le -\frac{157}{80}\)
or 
\(u\ge \frac{155}{80}.
\)
Consequently, $W(u)=0$ on the two exterior regions.
It therefore remains to verify only the finite interval
\(
\left[-\frac{157}{80},\frac{155}{80}\right].
\)
Since the verification grid has width $\Delta=1/80$, this interval is
covered by the cells
\(
\left[\frac q{80},\frac{q+1}{80}\right]\),
\(q\in \{-157,\ldots,154\}.
\)
Thus only these finitely many cells require computer-assisted interval
verification.

We now explain why the computer-assisted verification of these cells is
rigorous.  It is not based on evaluating $G$ at a finite set of sample
points.  For each rational subinterval $U$, directed interval
arithmetic computes an enclosure
\(
\mathbf G(U)
\)
such that
\(
G(u)\in\mathbf G(U)\)
for every \(u\in U.
\)
Therefore, if the lower endpoint of $\mathbf G(U)$ is strictly greater
than $t$, then
\(
G(u)>t\)
for every \(u\in U.
\)
If this test does not yet succeed, the verifier bisects $U$ at its exact
rational midpoint and checks the two resulting subintervals
recursively.  Successful termination therefore gives a finite covering
of the entire bounded verification region by intervals on which the
desired inequality holds uniformly.  This is an interval-enclosure
proof over a continuum of values, rather than a numerical sampling
argument.

The only additional care is required at the threshold
$\rho^u=\gamma$, where the formula for $h_\gamma$ changes.  For every
subinterval $U$, the verifier first computes an outward-rounded interval
enclosure of $\rho^u$ for all $u\in U$.  It uses the first branch of
$G$ only if this entire enclosure lies below the exact rational value
$\gamma=89/100$, and the second branch only if the entire enclosure
lies at or above $\gamma$.  If the enclosure straddles the threshold,
the verifier uses
\(
h_\gamma(\rho^u)\ge0
\)
and drops the nonnegative term
$\alpha h_\gamma(\rho^u)$.  Thus
\(
G(u)
\ge
W(u)-(2-\alpha)\rho^u
\)
remains a valid lower bound on the whole subinterval.  The verifier then
subdivides further whenever this weaker enclosure is insufficient.
Consequently, no proof step depends on a floating-point approximation
of $\log_\rho\gamma$.

For completeness, the two infinite tails require no interval
subdivision.  On the left tail,
\(
G(u)=-(2-\alpha)\rho^u,
\)
which is minimized at its right endpoint.  On the right tail,
\(
G(u)
=
2(\alpha-1)\rho^u
+\alpha\gamma^2\rho^{-u}
-2\alpha\gamma.
\)
The verifier certifies both the boundary value and positivity of the
derivative at the left endpoint of this tail; the derivative is
increasing there, so $G$ remains increasing thereafter.  These checks
establish the required inequality on both unbounded regions.

The verifier certifies all finite cells after checking 217,894
subintervals, with maximum subdivision depth 11, and proves
\(
G(u)>-0.214\)
for every \(u\in\mathbb R.
\)
Thus the pointwise condition~\eqref{eq:master-pointwise} holds on its
entire domain.  The verification results are summarized in
\Cref{tab:verify}.

\begin{table}[htbp]
\centering
\caption{Computer-assisted verification of the $1.3168$ certificate.}
\label{tab:verify}
\begin{tabular}{ll}
\toprule
Quantity & Certified value \\
\midrule
Nonzero profile cells
& 232, indices $-157,\ldots,74$ \\
Exact profile cost $\Pi$
& $1{,}735{,}592{,}044{,}371/
   20{,}000{,}000{,}000{,}000$ \\
Exact scalar slack
& $1{,}742{,}355{,}629/
   20{,}000{,}000{,}000{,}000$ \\
Smallest cap slack
& $>2.1049724246\times10^{-6}$ \\
Finite pointwise cells
& $q=-157,\ldots,154$ \\
Subintervals checked
& 217,894 \\
Maximum subdivision depth
& 11 \\
Pointwise condition
& $G(u)>-0.214$ for every $u\in\mathbb R$ \\
Exterior tails
& verified analytically \\
\bottomrule
\end{tabular}
\end{table}

\begin{theorem}
\label{thm:prefix-factor}
For $\rho=53/25$, the target prefix inequality~\eqref{eq:target} holds
with
\(\alpha=\frac{823}{625}=1.3168\)
for every machine and every Smith prefix.
\end{theorem}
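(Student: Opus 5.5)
The proof applies Theorem~\ref{thm:master} with the explicit certificate parameters $\rho=53/25$, $\gamma=89/100$, $\alpha=823/625$, $t=-107/500$ and the step profile $\psi$ of \Cref{tab:certificate}. Fix an arbitrary machine $i$ and Smith prefix $[j^*]$. If $V=0$ there is nothing to prove, as noted in \Cref{sec:normalize}. Otherwise the quantities $Q$, $F$, $\overline S$ and the distribution of $\mathsf R$ are defined as there, and \eqref{eq:target} is exactly the conclusion of Theorem~\ref{thm:master}. That theorem has no further dependence on the prefix: its hypotheses concern only $\rho,\gamma,\alpha,t,\psi$. It therefore suffices to check those hypotheses once, and the conclusion then holds uniformly for every machine and every prefix.

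The hypotheses to check are three.

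\emph{First}, $\psi$ must be a scale profile. It is a nonnegative step function with finite support, so it is measurable and $\Pi<\infty$. The cap $\psi(s)\le b(\rho^s)$ is the cell-by-cell check described above. For $s\ge0$ we have $b=1$ and every height is below $1$. For negative cells, monotonicity of $b(\rho^s)$ in $s$ reduces the check to the left endpoints, and there directed interval arithmetic certifies a positive slack.

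\emph{Second}, the scalar condition \eqref{eq:master-scalar} holds by the exact rational computation of $\Pi$ and of $2\gamma-\gamma^2=9879/10000$, which gives a strictly positive slack.

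\emph{Third}, the pointwise condition \eqref{eq:master-pointwise} is, via the bijection $u=\log_\rho r$, equivalent to $G(u)\ge t$ for all real $u$.

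For the pointwise condition I would split $\mathbb R$ into three parts: the left tail $u\le-157/80$, the bounded region $[-157/80,155/80]$, and the right tail $u\ge155/80$. On both tails $W\equiv0$ because the window misses the support of $\psi$.
\begin{itemize}
\item On the left tail, $\rho^u\le\rho^{-157/80}<\gamma$, so $G(u)=-(2-\alpha)\rho^u$. This is decreasing in $u$, so it is minimized at the right endpoint, where its value is well above $-0.214$.
\item On the right tail, $G$ is the convex function $2(\alpha-1)\rho^u+\alpha\gamma^2\rho^{-u}-2\alpha\gamma$. Its value at $u=155/80$ exceeds $t$, and its derivative there is positive. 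Convexity makes $G$ increasing thereafter, so $G\ge t$ on the whole tail.
\item On the bounded region, $W$ is affine on each grid cell $[q/80,(q+1)/80]$, with exactly computable rational endpoint data. On each cell I would run the recursive interval-enclosure argument, treating the branch switch at $\rho^u=\gamma$ by dropping the nonnegative term $\alpha h_\gamma$ whenever the enclosure straddles $\gamma$. Finite termination yields a covering on which $G>t$ holds uniformly.
\end{itemize}

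The main obstacle is this bounded-region verification. It must be a genuine proof over a continuum rather than sampling, and the margin in the scalar condition is small (about $8.7\times10^{-5}$), so the enclosures must be tight near the minimizers of $G$. I would rely on the reported successful run (217{,}894 subintervals, depth at most 11) together with the soundness of outward rounding. With all three hypotheses in hand, Theorem~\ref{thm:master} yields \eqref{eq:target} with $\alpha=1.3168$ for the arbitrary fixed prefix, which completes the proof.
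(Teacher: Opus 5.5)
Your proposal is correct and matches the paper's proof. Like the paper, you invoke Theorem~\ref{thm:master} with the same certificate ($\rho=53/25$, $\gamma=89/100$, $\alpha=823/625$, $t=-107/500$, and the 232-cell profile), and you discharge its hypotheses as Section~\ref{sec:certificate} does: the exact rational scalar check, the left-endpoint cap checks, interval-enclosure bisection on the bounded cells, and the analytic treatment of both tails.
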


\begin{proof}
The exact calculation verifies the scalar condition
\eqref{eq:master-scalar}.  The cap checks show that $\psi$ is a valid
scale profile, and the computer-assisted interval proof establishes the
pointwise condition~\eqref{eq:master-pointwise} for every $r>0$.
The conclusion therefore follows from Theorem~\ref{thm:master}.
\end{proof}

\subsection{From the prefix factor to the global approximation guarantee}
\label{sec:global}

We now complete the proof of Theorem~\ref{thm:main}.  

\begin{proof}[Proof of Theorem~\ref{thm:main}]
Let $y$ be the
configuration-LP solution supplied to Li's rounding algorithm.  By
Theorem~\ref{thm:prefix-factor}, after averaging over the random class shift,
the algorithmic bracket associated with every machine $i$ and every Smith
prefix is at most
\(
\alpha:=\frac{823}{625}=1.3168
\)
times the corresponding configuration-LP bracket.  Since every coefficient
$\sigma_{j^*}-\sigma_{j^*+1}$ in the prefix decomposition is nonnegative,
Theorem~\ref{thm:li-prefix} therefore gives
\(
\E[\cost_i(\ALG)]
\le
\alpha\sum_f y_{if}\cost_i(f)
\)
for every machine $i$.  Summing over the machines yields
\begin{equation}\label{eq:global-lp}
\E[\cost(\ALG)]
\le
\alpha\,\LP(y).
\end{equation}

For every $\eta>0$, the configuration LP can be solved within a factor
$1+\eta$ in time polynomial in the input size and $1/\eta$
\citep{SviridenkoWiese2013}.  Let $\LP^\star$ denote its optimum.  Since the
configuration LP is a relaxation,
\(
\LP^\star\le\OPT.
\)
Using a solution $y$ satisfying
\(
\LP(y)\le(1+\eta)\LP^\star
\)
in \eqref{eq:global-lp} gives
\(
\E[\cost(\ALG)]
\le
\alpha(1+\eta)\OPT.
\)
For a prescribed $\varepsilon>0$, choose
\(
\eta:=\frac{\varepsilon}{\alpha}
=\frac{625}{823}\varepsilon.
\)
Then
\(
\alpha(1+\eta)=\alpha+\varepsilon
=1.3168+\varepsilon.
\)

Li's iterative rounding runs in polynomial time for every fixed
$\rho>1$ by Theorem~\ref{thm:li-prefix}; here $\rho=53/25$ is a constant.
The configuration-LP approximation scheme is polynomial in the input size
and $1/\varepsilon$.  The certificate verifier is an offline proof artifact
and is not part of the scheduling algorithm.  Hence the complete algorithm
runs in randomized polynomial time.

\end{proof}

\section{Conclusion}
\label{sec:conclusion}

We have improved the approximation guarantee for
$R\|\sum_j w_jC_j$ from $1.36+\varepsilon$ to
$1.3168+\varepsilon$ while retaining Li's configuration-LP and
iterative-rounding framework.  The improvement comes from changing the
order of the analysis rather than strengthening the rounding primitive:
we fix and normalize a physical Smith prefix before averaging over the
random geometric shift.  This keeps the prefix distribution and
configuration moments common to all shifts and allows the class savings
to be aggregated through a one-dimensional logarithmic window.

Two structural ingredients make this comparison possible.  The
scale-dependent configuration bound retains information about the location
of normalized job sizes, while the cross-shift saving bound converts the
random geometric classes into a unit-window transform of a scale profile.
Their combination yields the master certificate of
Theorem~\ref{thm:master}.  The explicit rational profile in
\Cref{sec:certificate} then certifies the factor $1.3168$ over the full
continuum.

The result has several natural limitations.  It concerns the standard
machine-independent-weight model, does not strengthen Li's rounding
distribution itself, and does not claim that $1.3168$ is optimal within the
master-certificate framework.  Further improvements may come from richer
lower bounds for the capped class saving, combinations of several comparison
levels in the configuration bound, or stronger information from the
underlying rounding procedure.  More broadly, the fix-first,
shift-second viewpoint may be useful in other approximation analyses in
which randomly shifted geometric partitions act on a common underlying
object.

\section*{Acknowledgements}


\noindent\textbf{Research data}: No empirical data were used. The complete rational certificate and the Python verifier are provided in Appendices~\ref{app:certificate} and~\ref{app:verifier}, respectively.

\vspace{2pt}
\noindent\textbf{Declaration of AI technologies in the manuscript preparation process}: 
The authors used ChatGPT to refine the writing and improve the presentation of the verification program. All AI-assisted content and code were reviewed and edited by the authors, who take full responsibility for the final manuscript and results.

\vspace{2pt}
\noindent\textbf{Declarations of interest}: None

\printbibliography

\newpage
\appendix

\section{Proof structure}\label{app:proof-map}

\Cref{fig:proof-dependencies} records the main dependencies in the proof of
Theorem~\ref{thm:main}.  The diagram distinguishes the results inherited
from prior work, the analytic arguments introduced here, and the single
computer-assisted verification step.  It is intended only as a guide to
the proof; definitions and local calculations are omitted.

\begin{figure}[htbp]
\centering
\resizebox{0.7\linewidth}{!}{%
\begin{tikzpicture}[
    >=Latex,
    node distance=9mm and 16mm,
    box/.style={
        draw,
        rounded corners=2pt,
        align=center,
        minimum height=10mm,
        text width=4.35cm,
        inner sep=4pt,
        font=\small
    },
    inherited/.style={
        box,
        dashed
    },
    analytic/.style={
        box
    },
    computer/.style={
        box,
        double,
        double distance=1.1pt
    },
    result/.style={
        box,
        very thick
    },
    flow/.style={
        ->,
        thick
    }
]


\node[inherited] (li)
{
Li's prefix interface\\
Theorem~\ref{thm:li-prefix}\\[-1mm]
\eqref{eq:li-lp-prefix},
\eqref{eq:li-round-prefix}
};

\node[analytic, right=22mm of li] (norm)
{
Fixed-prefix normalization\\
\Cref{sec:normalize}\\[-1mm]
$\mathsf R,\;Q,\;F,\;\overline S$
};

\node[
    analytic,
    below=12mm of $(li)!0.5!(norm)$
] (target)
{
Normalized prefix comparison\\
\eqref{eq:target}
};

\draw[flow] (li) -- (target);
\draw[flow] (norm) -- (target);


\node[
    analytic,
    below left=13mm and 20mm of target
] (variance)
{
Scale-dependent configuration bound\\
Theorem~\ref{thm:variance}
};

\node[
    analytic,
    below right=8mm and 20mm of target
] (tangent)
{
Affine one-class bound\\
Lemma~\ref{lem:tangent}
};

\node[
    analytic,
    below=9mm of tangent
] (saving)
{
Cross-shift saving bound\\
Lemma~\ref{lem:saving}
};

\draw[flow] (tangent) -- (saving);


\node[
    analytic,
    below=28mm of target
] (master)
{
Cross-shift master certificate\\
Theorem~\ref{thm:master}\\[-1mm]
\eqref{eq:master-pointwise},
\eqref{eq:master-scalar}
};

\draw[flow] (target) -- (master);
\draw[flow] (variance) -- (master);
\draw[flow] (saving) -- (master);


\node[
    computer,
    below=10mm of master
] (cert)
{
Computer-assisted $1.3168$ certificate\\
\Cref{sec:certificate}\\[-1mm]
232-cell rational profile;\\
cap, scalar, and pointwise checks
};

\draw[flow]
    (master)
    --
    node[right,font=\scriptsize]
    {conditions to verify}
    (cert);


\node[
    result,
    below=10mm of cert
] (prefix)
{
Certified $1.3168$ prefix factor\\
Theorem~\ref{thm:prefix-factor}
};

\draw[flow] (cert) -- (prefix);


\node[
    analytic,
    below=11mm of prefix
] (global)
{
Prefix-to-global lifting\\
and running time\\
\Cref{sec:global}
};

\node[
    inherited,
    right=20mm of global
] (lpapprox)
{
$(1+\eta)$-approximate\\
configuration LP\\
\citep{SviridenkoWiese2013}
};

\draw[flow] (prefix) -- (global);
\draw[flow] (lpapprox) -- (global);

\draw[flow]
    (li.west)
    -- ++(-40mm,0)
    |- node[
        pos=0.72,
        left,
        font=\scriptsize,
        align=right
    ]
    {prefix decomposition\\
     and polynomial rounding}
    (global.west);


\node[
    result,
    below=11mm of global
] (main)
{
Randomized $(1.3168+\varepsilon)$-approximation\\
Theorem~\ref{thm:main}
};

\draw[flow] (global) -- (main);

\end{tikzpicture}%
}

\caption{
Proof dependencies for Theorem~\ref{thm:main}.
Dashed boxes denote results imported from prior work, ordinary boxes denote
analytic arguments proved here, and the double-bordered box denotes the
computer-assisted certificate verification.  Computer assistance enters
only after Theorem~\ref{thm:master}.
}

\label{fig:proof-dependencies}
\end{figure}

\newpage
\section{Complete rational certificate}\label{app:certificate}

\begin{table}[htbp]
\centering
\caption{Complete profile heights $\psi_\ell$. The profile equals $\psi_\ell$ on
$[\ell/80,(\ell+1)/80)$; unlisted indices have value zero.}
\label{tab:certificate}
\begingroup
\singlespacing
\setlength{\tabcolsep}{2.0pt}
\renewcommand{\arraystretch}{0.92}
\begin{adjustbox}{width=0.98\textwidth,center}
\begin{tabular}{*{8}{r@{\,:\,}l}}
\toprule
$\ell$ & $\psi_\ell$ & $\ell$ & $\psi_\ell$ & $\ell$ & $\psi_\ell$ & $\ell$ & $\psi_\ell$ & $\ell$ & $\psi_\ell$ & $\ell$ & $\psi_\ell$ & $\ell$ & $\psi_\ell$ & $\ell$ & $\psi_\ell$\\
\midrule
-157 & 0.000460 & -156 & 0.001835 & -155 & 0.003256 & -154 & 0.004723 & -153 & 0.006237 & -152 & 0.007801 & -151 & 0.009415 & -150 & 0.011082\\
-149 & 0.012802 & -148 & 0.014578 & -147 & 0.016411 & -146 & 0.018303 & -145 & 0.020255 & -144 & 0.022270 & -143 & 0.024350 & -142 & 0.026496\\
-141 & 0.028711 & -140 & 0.030996 & -139 & 0.033355 & -138 & 0.035790 & -137 & 0.038303 & -136 & 0.039636 & -135 & 0.040404 & -134 & 0.041187\\
-133 & 0.041985 & -132 & 0.042799 & -131 & 0.043629 & -130 & 0.044475 & -129 & 0.045339 & -128 & 0.046219 & -127 & 0.047117 & -126 & 0.048033\\
-125 & 0.048968 & -124 & 0.049921 & -123 & 0.050893 & -122 & 0.051884 & -121 & 0.052896 & -120 & 0.053927 & -119 & 0.054980 & -118 & 0.056053\\
-117 & 0.057148 & -116 & 0.058265 & -115 & 0.059405 & -114 & 0.060568 & -113 & 0.061754 & -112 & 0.062965 & -111 & 0.064200 & -110 & 0.065460\\
-109 & 0.066745 & -108 & 0.068057 & -107 & 0.069396 & -106 & 0.070762 & -105 & 0.072156 & -104 & 0.073579 & -103 & 0.075030 & -102 & 0.076512\\
-101 & 0.078024 & -100 & 0.079568 & -99 & 0.081143 & -98 & 0.082751 & -97 & 0.084393 & -96 & 0.086068 & -95 & 0.087779 & -94 & 0.089525\\
-93 & 0.091077 & -92 & 0.092466 & -91 & 0.094987 & -90 & 0.096885 & -89 & 0.098823 & -88 & 0.100802 & -87 & 0.102823 & -86 & 0.104887\\
-85 & 0.106995 & -84 & 0.109148 & -83 & 0.111347 & -82 & 0.113594 & -81 & 0.115889 & -80 & 0.118233 & -79 & 0.120629 & -78 & 0.123077\\
-77 & 0.125578 & -76 & 0.128134 & -75 & 0.130746 & -74 & 0.133416 & -73 & 0.136145 & -72 & 0.138935 & -71 & 0.141787 & -70 & 0.144703\\
-69 & 0.147684 & -68 & 0.150733 & -67 & 0.153851 & -66 & 0.157039 & -65 & 0.160301 & -64 & 0.163637 & -63 & 0.167051 & -62 & 0.170544\\
-61 & 0.174118 & -60 & 0.177776 & -59 & 0.181520 & -58 & 0.185353 & -57 & 0.189277 & -56 & 0.192036 & -55 & 0.194241 & -54 & 0.196476\\
-53 & 0.198739 & -52 & 0.201032 & -51 & 0.203356 & -50 & 0.205710 & -49 & 0.208095 & -48 & 0.210511 & -47 & 0.212960 & -46 & 0.215441\\
-45 & 0.218276 & -44 & 0.221156 & -43 & 0.224039 & -42 & 0.226924 & -41 & 0.229812 & -40 & 0.232704 & -39 & 0.235599 & -38 & 0.238498\\
-37 & 0.241401 & -36 & 0.244308 & -35 & 0.247220 & -34 & 0.250137 & -33 & 0.253059 & -32 & 0.255986 & -31 & 0.258919 & -30 & 0.261857\\
-29 & 0.264802 & -28 & 0.267753 & -27 & 0.270711 & -26 & 0.273675 & -25 & 0.276647 & -24 & 0.279626 & -23 & 0.282613 & -22 & 0.285608\\
-21 & 0.288611 & -20 & 0.291623 & -19 & 0.294643 & -18 & 0.297672 & -17 & 0.300711 & -16 & 0.303759 & -15 & 0.306817 & -14 & 0.309885\\
-13 & 0.312963 & -12 & 0.315357 & -11 & 0.311781 & -10 & 0.307600 & -9 & 0.303478 & -8 & 0.299415 & -7 & 0.295412 & -6 & 0.291468\\
-5 & 0.287586 & -4 & 0.283764 & -3 & 0.280004 & -2 & 0.276305 & -1 & 0.272670 & 0 & 0.269099 & 1 & 0.265591 & 2 & 0.262149\\
3 & 0.258772 & 4 & 0.255461 & 5 & 0.252219 & 6 & 0.249044 & 7 & 0.245939 & 8 & 0.242904 & 9 & 0.239941 & 10 & 0.237050\\
11 & 0.234233 & 12 & 0.231491 & 13 & 0.228824 & 14 & 0.226236 & 15 & 0.223727 & 16 & 0.221298 & 17 & 0.218951 & 18 & 0.216688\\
19 & 0.214511 & 20 & 0.212420 & 21 & 0.210420 & 22 & 0.208510 & 23 & 0.206694 & 24 & 0.203713 & 25 & 0.200181 & 26 & 0.196678\\
27 & 0.193204 & 28 & 0.189759 & 29 & 0.186344 & 30 & 0.182957 & 31 & 0.179600 & 32 & 0.176271 & 33 & 0.172972 & 34 & 0.169702\\
35 & 0.166782 & 36 & 0.163902 & 37 & 0.161019 & 38 & 0.158134 & 39 & 0.155245 & 40 & 0.152354 & 41 & 0.149458 & 42 & 0.146559\\
43 & 0.143656 & 44 & 0.140749 & 45 & 0.137837 & 46 & 0.134920 & 47 & 0.131998 & 48 & 0.129071 & 49 & 0.126138 & 50 & 0.123200\\
51 & 0.120255 & 52 & 0.117304 & 53 & 0.114346 & 54 & 0.111381 & 55 & 0.108409 & 56 & 0.105430 & 57 & 0.102443 & 58 & 0.099448\\
59 & 0.096445 & 60 & 0.093434 & 61 & 0.090413 & 62 & 0.087384 & 63 & 0.084346 & 64 & 0.081297 & 65 & 0.078240 & 66 & 0.075171\\
67 & 0.072093 & 68 & 0.068309 & 69 & 0.058533 & 70 & 0.048130 & 71 & 0.037763 & 72 & 0.027432 & 73 & 0.017136 & 74 & 0.006875\\
\bottomrule
\end{tabular}
\end{adjustbox}
\endgroup
\end{table}

\newpage
\section{Certificate verifier}\label{app:verifier}

\begingroup
\singlespacing

\setlength{\columnsep}{12pt}
\setlength{\columnseprule}{0.2pt}
\setlength{\multicolsep}{2pt}
\setlength{\premulticols}{0pt}
\setlength{\postmulticols}{0pt}

\raggedcolumns

\begin{multicols}{2}
\begin{lstlisting}[
  style=appendixpython2col
]
"""Compact rigorous verifier for the 1.3168 cross-shift certificate.
Requires Python 3 and mpmath 1.3.0.  All proof-critical tests use require(),
so they remain active under python -O.  The final minimization is diagnostic
only; the proof is the exact/interval verification preceding it.
"""
from fractions import Fraction as F
from functools import lru_cache
import mpmath as mp

mp.iv.dps=70; mp.mp.dps=100
def require(c,m):
    if not c: raise RuntimeError(m)
def IV(q): return mp.iv.mpf(q.numerator)/q.denominator
def MP(q): return mp.mpf(q.numerator)/q.denominator

RHO_Q,ALPHA_Q,GAMMA_Q,H_Q,T_Q=F(53,25),F(823,625),F(89,100),F(1,80),F(-107,500)
require(RHO_Q>1 and ALPHA_Q>1 and GAMMA_Q>=0 and H_Q>0,"bad parameters")
RHO,ALPHA,GAMMA,H,T=map(IV,(RHO_Q,ALPHA_Q,GAMMA_Q,H_Q,T_Q))
MICRO=10**6

VALUES="""0.000460 0.001835 0.003256 0.004723 0.006237 0.007801 0.009415 0.011082
0.012802 0.014578 0.016411 0.018303 0.020255 0.022270 0.024350 0.026496
0.028711 0.030996 0.033355 0.035790 0.038303 0.039636 0.040404 0.041187
0.041985 0.042799 0.043629 0.044475 0.045339 0.046219 0.047117 0.048033
0.048968 0.049921 0.050893 0.051884 0.052896 0.053927 0.054980 0.056053
0.057148 0.058265 0.059405 0.060568 0.061754 0.062965 0.064200 0.065460
0.066745 0.068057 0.069396 0.070762 0.072156 0.073579 0.075030 0.076512
0.078024 0.079568 0.081143 0.082751 0.084393 0.086068 0.087779 0.089525
0.091077 0.092466 0.094987 0.096885 0.098823 0.100802 0.102823 0.104887
0.106995 0.109148 0.111347 0.113594 0.115889 0.118233 0.120629 0.123077
0.125578 0.128134 0.130746 0.133416 0.136145 0.138935 0.141787 0.144703
0.147684 0.150733 0.153851 0.157039 0.160301 0.163637 0.167051 0.170544
0.174118 0.177776 0.181520 0.185353 0.189277 0.192036 0.194241 0.196476
0.198739 0.201032 0.203356 0.205710 0.208095 0.210511 0.212960 0.215441
0.218276 0.221156 0.224039 0.226924 0.229812 0.232704 0.235599 0.238498
0.241401 0.244308 0.247220 0.250137 0.253059 0.255986 0.258919 0.261857
0.264802 0.267753 0.270711 0.273675 0.276647 0.279626 0.282613 0.285608
0.288611 0.291623 0.294643 0.297672 0.300711 0.303759 0.306817 0.309885
0.312963 0.315357 0.311781 0.307600 0.303478 0.299415 0.295412 0.291468
0.287586 0.283764 0.280004 0.276305 0.272670 0.269099 0.265591 0.262149
0.258772 0.255461 0.252219 0.249044 0.245939 0.242904 0.239941 0.237050
0.234233 0.231491 0.228824 0.226236 0.223727 0.221298 0.218951 0.216688
0.214511 0.212420 0.210420 0.208510 0.206694 0.203713 0.200181 0.196678
0.193204 0.189759 0.186344 0.182957 0.179600 0.176271 0.172972 0.169702
0.166782 0.163902 0.161019 0.158134 0.155245 0.152354 0.149458 0.146559
0.143656 0.140749 0.137837 0.134920 0.131998 0.129071 0.126138 0.123200
0.120255 0.117304 0.114346 0.111381 0.108409 0.105430 0.102443 0.099448
0.096445 0.093434 0.090413 0.087384 0.084346 0.081297 0.078240 0.075171
0.072093 0.068309 0.058533 0.048130 0.037763 0.027432 0.017136 0.006875""".split()
require(len(VALUES)==232,"expected 232 profile heights")
def micro(s):
    q=F(s)*MICRO; require(q.denominator==1,f"non-micro height: {s}"); return q.numerator
NUM={j:micro(s) for j,s in zip(range(-157,75),VALUES)}
require(all(v>=0 for v in NUM.values()),"negative profile height")
JMIN,JMAX=min(NUM),max(NUM)
require((JMIN,JMAX)==(-157,74) and set(NUM)==set(range(JMIN,JMAX+1)),"bad support")
PSI={j:mp.iv.mpf(s) for j,s in zip(range(-157,75),VALUES)}
def psi(j): return PSI.get(j,mp.iv.mpf(0))
def psi_q(j): return F(NUM[j],MICRO) if j in NUM else F(0)

# Exact profile cost and scalar condition.
P=F(sum(v*v for v in NUM.values()),80*MICRO*MICRO)
B=2*GAMMA_Q-GAMMA_Q*GAMMA_Q
SLACK=T_Q+ALPHA_Q*B-1-P
require(P==F(1735592044371,20000000000000),"wrong Pi")
require(SLACK==F(1742355629,20000000000000) and SLACK>0,"scalar condition failed")

# Scale-profile cap.
CAP_TEXT="2.1049724246e-6"; CAP=mp.iv.mpf(CAP_TEXT); CAP_MIN=-134
cs={}
for j in range(JMIN,0):
    tau=RHO**(mp.iv.mpf(j)/80); s=1-mp.iv.sqrt(1-tau*tau)-psi(j)
    require(s.a>CAP.b,f"cap failed at {j}: {s}"); cs[j]=s
for j in range(0,JMAX+1): require((1-psi(j)).a>0,f"cap failed at {j}")
require(all(s.a>cs[CAP_MIN].b for j,s in cs.items() if j!=CAP_MIN),"cap minimum not unique")

# W(u)=2 int_[u-1,u] psi(s) ds and pointwise interval enclosure.
@lru_cache(None)
def W0(k): return 2*H*sum((psi(j) for j in range(k-80,k)),mp.iv.mpf(0))
def Giv(k,a,b):
    A,B=IV(a),IV(b); U=mp.iv.mpf([A.a,B.b]); left=mp.iv.mpf(k)/80
    W=W0(k)+2*(U-left)*(psi(k)-psi(k-80)); r=RHO**U
    if r.b<GAMMA.a: return W-(2-ALPHA)*r
    if r.a>=GAMMA.b: return W+2*(ALPHA-1)*r+ALPHA*GAMMA*GAMMA/r-2*ALPHA*GAMMA
    return W-(2-ALPHA)*r

count=depthmax=0
def certify(k,a,b,d=0):
    global count,depthmax
    count+=1; depthmax=max(depthmax,d); E=Giv(k,a,b)
    if E.a>T: return
    require(d<45,f"uncertified cell {k} [{a},{b}]: {E}")
    c=(a+b)/2; certify(k,a,c,d+1); certify(k,c,b,d+1)

KMIN,KMAX=JMIN,JMAX+80
for k in range(KMIN,KMAX+1): certify(k,F(k,80),F(k+1,80))

# Infinite tails.
ul=mp.iv.mpf(JMIN)/80
require((-(2-ALPHA)*RHO**ul).a>T,"left tail failed")
ur=mp.iv.mpf(JMAX+81)/80; rr=RHO**ur
Gr=2*(ALPHA-1)*rr+ALPHA*GAMMA*GAMMA/rr-2*ALPHA*GAMMA
Dr=mp.iv.log(RHO)*(2*(ALPHA-1)*rr-ALPHA*GAMMA*GAMMA/rr)
require(Gr.a>T and Dr.a>0,"right tail failed")

# Non-proof high-precision diagnostic of the true piecewise-analytic minimum.
rho,alpha,gamma,t=map(MP,(RHO_Q,ALPHA_Q,GAMMA_Q,T_Q)); L=mp.log(rho); ug=mp.log(gamma)/L
def pmp(j): return MP(psi_q(j))
@lru_cache(None)
def w0(k): return 2*MP(H_Q)*sum((pmp(j) for j in range(k-80,k)),mp.mpf("0"))
def Wmp(u,k): return w0(k)+2*(pmp(k)-pmp(k-80))*(u-mp.mpf(k)/80)
def Gmp(u,k):
    r=rho**u; h=(r-gamma)**2/r if r>=gamma else mp.mpf("0")
    return Wmp(u,k)+alpha*h-(2-alpha)*r
def critical(k):
    m=2*(pmp(k)-pmp(k-80)); A=2*(alpha-1); b=m/L
    r=(-b+mp.sqrt(b*b+4*A*alpha*gamma*gamma))/(2*A)
    return mp.log(r)/L
best=None
def test(u,k):
    global best
    g=Gmp(u,k)
    if best is None or g<best[0]: best=(g,u,k)
for k in range(KMIN,KMAX+1):
    lo,hi=mp.mpf(k)/80,mp.mpf(k+1)/80
    if lo<ug: test(lo,k); test(min(hi,ug),k)
    if hi>=ug:
        a=max(lo,ug); test(a,k); test(hi,k); u=critical(k)
        if a<=u<=hi: test(u,k)

gmin,umin,qmin=best
print(f"cells={len(NUM)}; support={JMIN},...,{JMAX}")
print(f"Pi={P.numerator}/{P.denominator} = {float(P):.15f}")
print(f"scalar slack={SLACK.numerator}/{SLACK.denominator} = {float(SLACK):.15g}")
print(f"minimum cap slack > {CAP_TEXT} at j={CAP_MIN}")
print(f"pointwise cells q={KMIN},...,{KMAX}; subsegments={count}; max depth={depthmax}")
print("G(u) > -0.214 for all real u; both tails certified")
print("CERTIFICATE STATUS: VERIFIED")
print("NON-PROOF DIAGNOSTIC:")
print("  q =",qmin)
print("  u =",mp.nstr(umin,30))
print("  G(u) =",mp.nstr(gmin,30))
print("  margin =",mp.nstr(gmin-t,20))
\end{lstlisting}
\end{multicols}

\end{document}